\newtheorem{lemma}{Lemma}
\newtheorem{theorem}{Theorem}
\newtheorem{corollary}{Corollary}
\begin{document}
%
\title{Grant-free Massive Random Access with Retransmission: Receiver Optimization and Performance Analysis}
%
%
%

\author{Xinyu~Bian,~\IEEEmembership{Student Member,~IEEE,}
        Yuyi~Mao,~\IEEEmembership{Member,~IEEE,}
        and~Jun~Zhang,~\IEEEmembership{Fellow,~IEEE}
\thanks{X. Bian and J. Zhang are with the Department of Electronic and Computer Engineering, the Hong Kong University of Science and Technology, Hong Kong (E-mail: xinyu.bian@connect.ust.hk, eejzhang@ust.hk). Y. Mao is with the Department of Electronic and Information Engineering, the Hong Kong Polytechnic University, Hong Kong (E-mail: yuyi-eie.mao@polyu.edu.hk). This work was supported by the General Research Fund (Project No. 15207220) from the Hong Kong Research Grants Council. \emph{(Corresponding author: Yuyi Mao)}

Part of this work was presented at the 2022 IEEE Global Communications Conference (GLOBECOM) \cite{xbian2022}.}
\thanks{}
\thanks{}}

%
%

\markboth{}%
{Shell \MakeLowercase{\textit{et al.}}: Bare Demo of IEEEtran.cls for IEEE Communications Society Journals}
%



\maketitle

\begin{abstract}
There is an increasing demand of massive machine-type communication (mMTC) to provide scalable access for a large number of devices, which has prompted extensive investigation on grant-free massive random access (RA) in 5G and beyond wireless networks. Although many efficient signal processing algorithms have been developed, the limited radio resource for pilot transmission in grant-free massive RA systems makes accurate user activity detection and channel estimation challenging, which thereby compromises the communication reliability. In this paper, we adopt retransmission as a means to improve the quality of service (QoS) for grant-free massive RA. Specifically, by jointly leveraging the user activity correlation between adjacent transmission blocks and the historical channel estimation results, we first develop an activity-correlation-aware receiver for grant-free massive RA systems with retransmission based on the correlated approximate message passing (AMP) algorithm. Then, we analyze the performance of the proposed receiver, including the user activity detection, channel estimation, and data error, by resorting to the state evolution of the correlated AMP algorithm and the random matrix theory (RMT). Our analysis admits a tight closed-form approximation for frame error rate (FER) evaluation. Simulation results corroborate our theoretical analysis and demonstrate the effectiveness of the proposed receiver for grant-free massive RA with retransmission, compared with a conventional design that disregards the critical user activity correlation.
\end{abstract}

\begin{IEEEkeywords}
Grant-free massive random access (RA), retransmission, user activity detection, channel estimation, approximate message passing (AMP), finite-blocklength coding, frame error rate (FER).
\end{IEEEkeywords}

%
\IEEEpeerreviewmaketitle

\section{Introduction}
The proliferation of the Internet of Things (IoT) has enabled tremendous thrilling applications, such as wearable electronics, smart cities, and intelligent manufacturing, which have fundamentally transformed our everyday lives \cite{aal2015}. Meanwhile, the exponential increase of IoT devices \cite{sta2016} has imposed an unprecedented challenge to future wireless networks in supporting massive concurrent connections. Hence, massive machine-type communications (mMTC) has been identified as one of the core services in the fifth generation (5G) and beyond wireless networks \cite{itu}. A unique feature in mMTC is that only a small proportion of the massive users simultaneously transmit short data packets at a given instant \cite{cbo2016}. This feature warrants a major paradigm shift of RA schemes, from the conventional grant-based RA, where each user requires a permission from the base station (BS) before transmission, to the grant-free RA, where each active user directly sends the payload data without waiting for the BS's approval \cite{xchen2021,yshi2020, xbian2021conference}. 

Although grant-free RA is prominent in reducing the access latency and signaling overhead for mMTC, the limited radio resource for pilot transmission makes it difficult to perform accurate user activity detection and channel estimation at the BS, which is critical to the downstream data detection and decoding. Many efficient signal processing algorithms have been developed to improve the accuracy of user activity detection and channel estimation \cite{donoho2006}, but uplink data transmission in grant-free massive RA systems may still be prone to failure with only a single transmission attempt. To improve the communication reliability, in this paper, we advocate to integrate the existing grant-free massive RA systems with a retransmission mechanism. Such a mission is highly nontrivial despite retransmission protocols such as the hybrid automatic repeat request (HARQ) have a long history for grant-based RA \cite{aah2021}. Specifically, since the unknown user activity remains static across multiple transmission blocks in grant-free massive RA, a new receiver capable of exploiting such user activity correlation is imperative in order to maximize the benefits of retransmission for reliable mMTC.

\subsection{Related Works and Motivations}
There has been a recent growing interest in grant-free massive RA. To achieve activity detection for grant-free massive multi-input multi-output (MIMO) systems, a low-complexity algorithm was developed through the sample covariance matrix of the received pilot signal in \cite{shag2018}. Such an algorithm, however, is not able to estimate the unknown channel coefficients, which are critical to data detection and decoding. Therefore, joint activity detection and channel estimation (JADCE) has received significant attention in subsequent investigations \cite{zchen2018,mke2020,qzou2020,xbian2021}. In particular, by formulating JADCE as a compressive sensing problem, an approximate message passing (AMP)-based algorithm was proposed in \cite{zchen2018}, which exploits the wireless channel statistics and provides a theoretical performance characterization. In \cite{mke2020}, both the spatial and angular domain channel sparsity are leveraged to improve the accuracy via alternatively detecting the user activity and estimating the channel coefficients. Besides, motivated by the common sparsity in pilot and data signal, data-assisted approaches have been most recently developed for JADCE. For example, a joint activity detection, channel estimation, and multi-user detection algorithm was proposed in \cite{qzou2020} under the framework of bilinear generalized AMP (BiG-AMP), which utilizes the estimated payload data to aid channel estimation. In addition, the data decoding procedures were jointly optimized with JADCE in \cite{xbian2021}, where the extrinsic information derived from a channel decoder is used as the auxiliary prior for JADCE. Nonetheless, these works only focus on grant-free massive RA systems with one-shot data transmission, which neglect the potentials of retransmission mechanisms to secure further performance improvements.

As an indispensable mechanism in modern digital communication systems, retransmission is advantageous for grant-free massive RA to achieve reliable communications as it provides more diversity gain. In \cite{fjab2021}, a variety of HARQ schemes were adapted for grant-free non-orthogonal multiple access (NOMA) systems, where the pilot symbols are not merely used for JADCE, but also for carrying the HARQ meta-data. Authors of \cite{hgsr2022} proposed a novel grant-free RA mechanism to resolve collisions, where users are allowed to transmit in one or multiple opportunities chosen at random. The probability of successful transmission was analyzed in closed form, which unveils the fundamental tradeoffs between the number of preambles and retransmissions in terms of latency and energy consumption. Likewise, \cite{dma2022} considered a general retransmission-based RA framework that uniﬁes NOMA with HARQ-based protocols, and characterized the spectral efﬁciency/user density versus per-bit signal-to-noise ratio (SNR) tradeoffs. Moreover, in order to improve the throughput of grant-free RA with massive MIMO, immediate retransmissions are initiated for active users with a low signal-to-interference-plus-noise ratio (SINR) in \cite{jchoi2020}. However, all the aforementioned works exploited the retransmission diversity for grant-free massive RA simply through conventional HARQ operations, e.g., chase combining and incremental redundancy, leaving the signal processing algorithms at the BS largely intact. A key observation in grant-free massive RA is that the user activity remains static across adjacent transmission blocks, which is valuable for improving the performance of JADCE and thus the quality of service (QoS). Developing a novel receiver dedicated for grant-free massive RA systems with retransmission by leveraging such user activity correlation and analyzing its performance form the main motivations of this paper.


\subsection{Contributions}
In this paper, we endeavor to develop a new receiver for grant-free massive RA systems with retransmission and quantify its performance through theoretical analysis. Our main contributions are summarized below.

\begin{itemize}
    \item To improve the reliability of mMTC, we investigate a typical grant-free massive RA system with retransmission. Noticing that the user activity remains static across multiple transmission blocks and the previous channel estimation results are informative for JADCE in the next transmission block, we develop a novel activity-correlation-aware receiver based on the correlated AMP algorithm \cite{annama2019} to leverage both types of information, where the historical information (HI) is judiciously designed based on the state evolution.
\end{itemize}

\begin{itemize}
    \item We first characterize the user activity detection error of the proposed receiver, including the false alarm and missed detection probabilities, based on, again, the state evolution of the correlated AMP algorithm. Then, conditioned on the outcome of user activity detection, we derive the block error rate (BLER) expression in the finite block-length regime by analyzing the distribution of the receive signal-to-noise ratio (SNR) via the random matrix theory (RMT), which also admits a tight closed-form approximation. Finally, the frame error rate (FER) expression is obtained in a recursive form based on the BLER analysis.
\end{itemize}

\begin{itemize}
    \item The extensive simulation results first corroborate our theoretical performance analysis. They also demonstrate the effectiveness of the proposed activity-correlation-aware receiver over the conventional AMP-based design for grant-free massive RA systems with retransmission. The performance improvement achieved by the proposed receiver becomes more prominent for data traffic that can tolerate a longer delay.
\end{itemize}

\subsection{Organization}
The rest of this paper is organized as follows. We introduce the grant-free massive RA system with retransmission in Section \ref{sectionii}. In Section \ref{sectioniii}, we review the conventional AMP-based receiver for grant-free massive RA systems. A novel activity-correlation-aware receiver is developed in Section \ref{sectioniv} and its performance is analyzed in Section \ref{sectionv}. Simulation results are presented in Section \ref{sectionvi} and Section \ref{sectionvii} concludes this paper.

\subsection{Notations} 
We use lower-case letters, bold-face lower-case letters, bold-face upper-case letters, and math calligraphy letters to denote scalars, vectors, matrices, and sets, respectively. The matrix inverse, transpose, and conjugate transpose operators are denoted as $(\cdot)^{-1}$, $(\cdot)^{\mathrm{T}}$, and $(\cdot)^{\mathrm{H}}$, respectively. In addition, $\delta_{0}\left(\cdot\right)$ denotes the Dirac delta function, and $\mathcal{CN}(\boldsymbol{\mu},\mathbf{v})$ denotes the probability density function (PDF) of a complex Gaussian random variable with mean $\boldsymbol{\mu}$ and variance $\mathbf{v}$.

\section{System Model \label{sectionii}}
\subsection{Grant-free Massive RA Systems with Retransmission}
We investigate the uplink communication procedures in grant-free massive RA systems with retransmission. As shown in Fig. \ref{model}, we assume $N$ single-antenna users (denoted as set $\mathcal{N} \triangleq \{1,\cdots,N\}$) with delay-sensitive traffic are served by an $M$-antenna BS. The delay requirements of users are assumed equal to the length of a frame that consists of $J$ ($J \geq 2$) transmission blocks. At the beginning of each frame, $K$ out of the $N$ users become active for short-packet transmission, and the set of active users at the beginning of a typical frame is denoted as $\mathcal{K}$. Hence, a data packet can undergo at most $J$ transmission attempts, including one initial transmission and $J-1$ retransmissions. To avoid the system from being overloaded \cite{b14}, the number of BS antennas is assumed to be no less than the number of active users, i.e., $M \geq K$. 

For ease of exposition, the stop-and-wait automatic repeat request (ARQ) protocol is adopted to control the transmission process, where acknowledgements (ACKs) are fed back to users by the BS after their data packets are correctly received. Therefore, if an active user receives an ACK message before timeout (i.e., the end of the frame), it succeeds in transmission and becomes inactive in the remaining transmission blocks. Otherwise, it retransmits the packet until all the $J-1$ retransmission attempts are exhausted. We denote $K^{(j)}$ ($K^{(j)}\leq K$) as the number of active users in the $j$-th transmission block, which is known at the BS under stop-and-wait ARQ. The set of active users in the $j$-th transmission block is represented by $\mathcal{K}^{(j)} \triangleq \left\{n \in \mathcal{K} | u_{n}^{(j)}=1 \right\}$, where $u_{n}^{(j)}\in\{0,1\}$ is the user activity indicator with $u_{n}^{(j)} = 1$ meaning user $n$ is active and vice versa. Notably, $\mathcal{K}^{1} = \mathcal{K}$, $K^{\left(J\right)} \geq K^{\left(J-1\right)} \geq \cdots \geq K^{\left(1\right)}$, and the activity status of a user in $\mathcal{K}$ does not change within in a frame unless an ACK message is received. Besides, if a data packet cannot be successfully decoded within a frame, it is discarded in the new frame and a transmission error occurs.
\begin{figure}[htpb]
\centering
\includegraphics[width=4.4in]{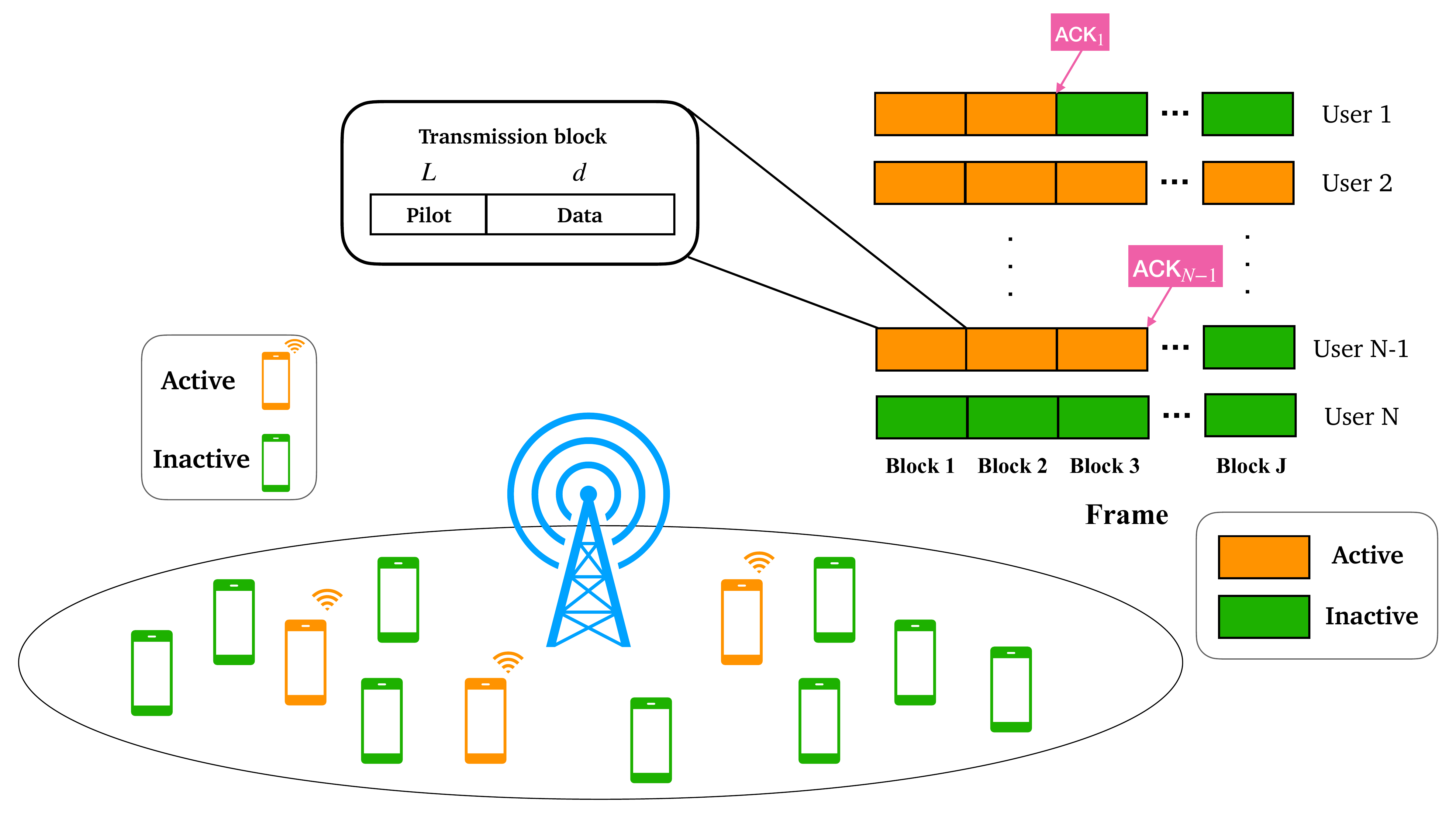}
\caption{Model and frame structure of a grant-free massive RA system with retransmission. In each frame, an active user has $J$ attempts to transmit its data. Once an ACK message is received from the BS, the active user becomes inactive. Otherwise, it remains active to retransmit data until the last transmission block.}
\label{model}
\end{figure}

\subsection{Signal Model}
The block fading channel model is adopted, where the channel condition remains unchanged within a transmission block spanning $T$ symbol intervals, but varies independently across different transmission blocks. In the $j$-th transmission block, the uplink channel vector from user $n$ to the BS is denoted as $\mathbf{f}_{n}^{(j)}=\sqrt{\beta_{n}} \boldsymbol{\alpha}_{n}^{(j)}$, where $\boldsymbol{\alpha}_{n}^{(j)}$ and $\beta_{n}$ stand for the small-scale and large-scale fading coefficients, respectively. We focus on the Rayleigh fading channels, i.e., $\boldsymbol{\alpha}_{n}^{(j)} \sim \mathcal{CN}(\boldsymbol{0},\mathbf{I}_{M})$, and assume $\{\beta_n\}$'s are known at the BS.

The classic two-phase grant-free RA scheme \cite{zchen2018} is implemented in each transmission block as shown in Fig. \ref{model}. In the first phase, $L$ symbols are reserved for pilot transmission, and $d\triangleq T-L$ symbols are used for data delivery in the second phase. Since the number of users can be much larger than the pilot length, assigning orthogonal pilots to all the users is impractical. Therefore, a unique pilot sequence $\sqrt{L}\mathbf{p}_{n}$ is allocated to each user, where $\mathbf{p}_{n}\triangleq \left[p_{n, 1}, \cdots, p_{n, L}\right]^{\mathrm{T}}$ and $p_{n, l} \sim \mathcal{C} \mathcal{N}\left(0, 1/L\right)$. Despite being non-orthogonal, the set of pilot sequences achieves asymptotic orthogonality when $L\rightarrow \infty$ \cite{zchen2018}. Besides, the statistical channel inversion power control \cite{b15} is implemented, i.e., the transmit power $\rho_{n}$ for each user is chosen such that $\rho_{n}\beta_{n}=\beta$, where $\beta$ is the target received signal strength at the BS. Then, the received pilot signal $\mathbf{Y}_{p}^{(j)} \in \mathbb{C}^{L\times M}$ at the BS in the $j$-th transmission block can be expressed as follows: 
\begin{align}
\mathbf{Y}_{p}^{(j)}=\sqrt{L}\mathbf{P}\mathbf{H}^{(j)}+\mathbf{N}_{p}^{(j)}=\sum_{n \in \mathcal{N}}\sqrt{L}\mathbf{p}_{n}(\mathbf{h}^{(j)}_{n})^{\mathrm{T}}+\mathbf{N}_{p}^{(j)},
\end{align}
\noindent where $\mathbf{P}\triangleq \left[\mathbf{p}_{1},\cdots,\mathbf{p}_{N}\right]$, $\mathbf{H}^{(j)} \triangleq \left[\mathbf{h}^{(j)}_{1},...,\mathbf{h}^{(j)}_{N}\right]^{\mathrm{T}}$ represents the effective channel matrix in the $j$-th transmission block with $\mathbf{h}^{(j)}_{n}\triangleq u_{n}^{(j)}\sqrt{\beta}\boldsymbol{\alpha}_{n}^{(j)}$, and $\mathbf{N}_{p}^{(j)}\triangleq \left[\mathbf{n}_{p,1}^{(j)},...,\mathbf{n}_{p,L}^{(j)}\right]^{\mathrm{T}}$ denotes the Gaussian noise with zero mean and variance $\sigma^{2}$ for each element.

A short data packet with $c$ payload bits needs to be transmitted for each active user, which is encoded as a codeword of $d$ symbols, denoted as $\mathbf{s}_{n}$. We further assume each symbol in a codeword has unit power. Therefore, the received data signal at the BS in the $j$-th transmission block $\mathbf{Y}^{(j)}\in \mathbb{C}^{M \times d}$ can be expressed as follows: 
\begin{align}
\mathbf{Y}^{(j)}= \sum_{n \in \mathcal{K}^{(j)}} \mathbf{h}_{n}^{(j)} \mathbf{s}_{n}^{\mathrm{T}}+\mathbf{N}^{(j)},
\end{align}

\noindent where the noise term $\mathbf{N}^{(j)} \triangleq \left[\mathbf{n}_{1}^{(j)},...,\mathbf{n}_{d}^{(j)}\right]$ follows the same distribution as $\mathbf{N}_{p}^{(j)}$.

\section{AMP-based Receiver for Grant-Free Massive RA \label{sectioniii}}
The AMP-based receiver \cite{lliu2018} is a popular choice for grant-free massive RA, which is responsible for activity detection, channel estimation, data detection and decoding for the active users. In particular, based on the received pilot signal in each transmission block, it first performs the AMP algorithm \cite{zchen2018} for joint activity detection and channel estimation. Then, the conventional data detection and decoding procedures are applied for users that are detected as active. In this section, we review the basic principles of the AMP algorithm for JADCE, as well as data detection and decoding.

\subsection{AMP for JADCE}
In the first phase of each transmission block, the AMP algorithm is adopted to process the received pilot signal $\mathbf{Y}_{p}^{(j)}$ at the BS to estimate the set of active users and their channel coefficients. This algorithm not only exhibits superior performance, and its state evolution \cite{lliu2018} also provides a powerful framework for the performance analysis of grant-free massive RA systems. In particular, in the $j$-th transmission block, the AMP algorithm starts from $\mathbf{H}_{0}^{(j)}=\mathbf{0}$ and $\mathbf{R}_{0}^{(j)}=\mathbf{Y}_{p}$, and iterates as follows:
\begin{align}
    \mathbf{h}_{n, t+1}^{(j)}=\eta_{t}^{(j)}\left(\mathbf{h}_{n,t}^{(j)}+\left(\mathbf{R}_{t}^{(j)}\right)^{\mathrm{H}} \mathbf{p}_{n}\right),
\label{esth}
\end{align}
\begin{align}
\mathbf{R}_{t+1}^{(j)} =\mathbf{Y}_{p}^{(j)}-\mathbf{P} \mathbf{H}_{t+1}^{(j)}+\frac{N}{L} \mathbf{R}_{t}^{(j)}\sum_{n=1}^{N}\frac{\eta_{t}^{\prime (j)}\left(\mathbf{h}_{n,t}^{(j)}+\left(\mathbf{R}_{t}^{(j)}\right)^{\mathrm{H}} \mathbf{p}_{n}\right)}{N},
\end{align}

\noindent where $t=0,1,...$ denotes the iteration index, $\mathbf{H}_{t}^{(j)}\triangleq \left[\mathbf{h}_{1,t}^{(j)},...,\mathbf{h}_{N,t}^{(j)}\right]^{\mathrm{T}}$ is the estimate of $\mathbf{H}^{(j)}$ in the $t$-th iteration, and $\mathbf{R}_{t}^{(j)}$ represents the residual of the received pilot signal in iteration $t$. Besides, $\eta_{t}^{(j)}(\cdot)$ as expressed below is the minimum mean square error (MMSE) denoiser that provides an estimate of the effective channel coefficients \cite{zchen2018}:
\begin{align}
    \eta_{t}^{(j)}(\mathbf{a}_{n,t}^{(j)})=\frac{\frac{\beta\mathbf{a}_{n,t}^{(j)}}{\beta+(\tau_t^{(j)})^2}}{1+\frac{1-\lambda^{(j)}}{\lambda^{(j)}} \left(\frac{\beta+(\tau_t^{(j)})^2}{(\tau_t^{(j)})^2}\right)^M e^{\left(\frac{1}{\beta+(\tau_t^{(j)})^2}-\frac{1}{(\tau_t^{(j)})^2}\right)||\mathbf{a}_{n,t}^{(j)}||^2}},
\label{denoiser}
\end{align}

\noindent where $\mathbf{a}_{n,t}^{(j)}\triangleq \mathbf{h}_{n}^{(j)}+(\tau_{t}^{(j)})^2 \mathbf{v}_{n}^{(j)}$ with $\mathbf{v}_{n}^{(j)}\sim \mathcal{CN}(\mathbf{0},\mathbf{I})$ denoting the equivalent noise independent of $\mathbf{h}_{n}^{(j)}$, and $\eta_{t}^{\prime (j)}(\cdot)$ is the ﬁrst-order derivative of $\eta_{t}^{(j)}(\cdot)$. In addition, $\lambda^{(j)} = K^{(j)} \slash N$ and $\tau_t^{(j)}$ denotes the AMP state in the $t$-th iteration, which follows the evolution as stated in Lemma \ref{lemma1}.

\begin{lemma} \label{lemma1}
[13, Proposition 9]
Define two random variables $\mathbf{X}^{(j)}$ and $\mathbf{V}^{(j)}$ following distributions $(1-\lambda^{(j)}) \delta_{0}+\lambda^{(j)} \mathcal{CN}(0,\beta\mathbf{I})$ and $\mathcal{CN}(0,\mathbf{I})$, respectively. In the asymptotic regime where $N,L,K^{(j)} \rightarrow \infty$, the state evolution of the AMP algorithm with the MMSE denoiser in (\ref{denoiser}), which tracks the mean square error (MSE) of the estimated effective channel coefficients, exhibits the following recursion:
\begin{align}
\mathbf{\Sigma}_{t+1}^{(j)}=\frac{\sigma^{2}}{L} \mathbf{I}+\frac{N}{L} \mathbb{E}_{\mathbf{X}^{(j)},\mathbf{V}^{(j)}}\left[\Big |\Big|\eta_{t}^{(j)}(\mathbf{X}^{(j)}+\mathbf{(\Sigma}_{t}^{(j)})^{\frac{1}{2}} \mathbf{V}^{(j)})-\mathbf{X}^{(j)}\Big|\Big|_{2}^{2}\right].
\end{align}

\noindent In the asymptotic regime, the state matrix stays as a scaled identity matrix when the MMSE denoiser is applied, i.e., $\mathbf{\Sigma}_{t}^{(j)}=(\tau_{t}^{(j)})^{2}\mathbf{I}$.
\end{lemma}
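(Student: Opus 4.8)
The statement is quoted as [13, Proposition 9], so the plan is to indicate how it descends from the general AMP state-evolution theory rather than to redo the full argument. The plan is to invoke the abstract AMP state-evolution result in the style of Bayati--Montanari, in its complex-valued, multiple-measurement-vector form: for the row-separable denoiser $\eta_t^{(j)}(\cdot)$ acting on the linear model $\mathbf{Y}_p^{(j)}=\sqrt{L}\mathbf{P}\mathbf{H}^{(j)}+\mathbf{N}_p^{(j)}$ with an i.i.d.\ (sub-Gaussian) sensing matrix, in the asymptotic regime $N,L,K^{(j)}\to\infty$ with $N/L$ and $\lambda^{(j)}=K^{(j)}/N$ fixed, the pre-denoising quantity $\mathbf{h}_{n,t}^{(j)}+(\mathbf{R}_t^{(j)})^{\mathrm H}\mathbf{p}_n$ appearing in (\ref{esth}) is empirically distributed (in the relevant Wasserstein sense) as $\mathbf{h}_n^{(j)}+(\mathbf{\Sigma}_t^{(j)})^{1/2}\mathbf{v}_n^{(j)}$ with $\mathbf{v}_n^{(j)}\sim\mathcal{CN}(\mathbf{0},\mathbf{I}_M)$ independent of $\mathbf{h}_n^{(j)}$. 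The Onsager term $\frac{N}{L}\mathbf{R}_t^{(j)}\frac{1}{N}\sum_{n}\eta_t^{\prime(j)}(\cdot)$ in the residual update is exactly what removes the bias and makes this Gaussian decoupling asymptotically exact.

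Granting this decoupling principle, the MSE recursion follows by bookkeeping. The residual $\mathbf{R}_{t+1}^{(j)}$ splits into the original noise $\mathbf{N}_p^{(j)}$, which contributes variance $\sigma^2/L$ per entry after accounting for the $\sqrt{L}\mathbf{P}$ normalization, plus the aggregated estimation error $\sum_n\mathbf{p}_n(\mathbf{h}_{n,t+1}^{(j)}-\mathbf{h}_n^{(j)})^{\mathrm T}$ (up to scaling); forming the effective covariance of $(\mathbf{R}_{t+1}^{(j)})^{\mathrm H}\mathbf{p}_m$, using $\|\mathbf{p}_n\|^2\to 1$ and the asymptotic orthogonality of the pilot set, and collecting the $N/L$ factor reproduces exactly the claimed recursion. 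Here the signal law is $\mathbf{X}^{(j)}\sim(1-\lambda^{(j)})\delta_0+\lambda^{(j)}\mathcal{CN}(0,\beta\mathbf{I})$ because a fraction $\lambda^{(j)}$ of users are active and each active effective channel $\mathbf{h}_n^{(j)}=\sqrt{\beta}\boldsymbol{\alpha}_n^{(j)}$ is $\mathcal{CN}(0,\beta\mathbf{I})$, while the inactive ones contribute the atom at $0$.

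For the second claim, the plan is an induction on $t$ showing $\mathbf{\Sigma}_t^{(j)}=(\tau_t^{(j)})^2\mathbf{I}$. The base case holds because the initial state matrix is $\frac{\sigma^2}{L}\mathbf{I}$ plus a term proportional to the signal covariance $\mathbb{E}[\mathbf{X}^{(j)}(\mathbf{X}^{(j)})^{\mathrm H}]=\lambda^{(j)}\beta\mathbf{I}$, both scaled identities. For the inductive step, with $\mathbf{\Sigma}_t^{(j)}=(\tau_t^{(j)})^2\mathbf{I}$ the active-channel prior $\mathcal{CN}(0,\beta\mathbf{I})$ and the equivalent-noise law $\mathcal{CN}(0,(\tau_t^{(j)})^2\mathbf{I})$ are invariant under any unitary rotation $\mathbf{U}\in\mathbb{C}^{M\times M}$, and the MMSE denoiser in (\ref{denoiser}) is equivariant, $\eta_t^{(j)}(\mathbf{U}\mathbf{a})=\mathbf{U}\eta_t^{(j)}(\mathbf{a})$, since it depends on its argument only through the scalar $\|\mathbf{a}\|^2$ and a linear rescaling. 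Hence the error vector $\eta_t^{(j)}(\mathbf{X}^{(j)}+(\mathbf{\Sigma}_t^{(j)})^{1/2}\mathbf{V}^{(j)})-\mathbf{X}^{(j)}$ has a rotationally invariant distribution, so its covariance commutes with every unitary and is therefore a multiple of $\mathbf{I}$; combined with the $\frac{\sigma^2}{L}\mathbf{I}$ term this forces $\mathbf{\Sigma}_{t+1}^{(j)}=(\tau_{t+1}^{(j)})^2\mathbf{I}$, and the matrix recursion collapses to a scalar fixed-point equation for $(\tau_t^{(j)})^2$.

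The hard part is the first step: rigorously establishing the Gaussian-decoupling/state-evolution claim requires the full Bayati--Montanari conditioning technique (or its complex, multiple-measurement-vector extension), including concentration of pseudo-Lipschitz functions of the AMP iterates and careful tracking of the Onsager correction across iterations. Since this is precisely the content of [13, Proposition 9], in this paper it suffices to cite that result; the only part worth spelling out is the rotational-invariance argument reducing the matrix recursion to a scalar one, which hinges on the isotropic Rayleigh-fading model assumed here.
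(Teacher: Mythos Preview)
Your proposal is correct and aligned with the paper's treatment: the paper does not prove Lemma~\ref{lemma1} at all but simply imports it verbatim from \cite{zchen2018}, so your plan to cite the Bayati--Montanari/AMP state-evolution machinery and only spell out the rotational-invariance induction for the scaled-identity claim is exactly the right level of detail. Your unitary-equivariance argument for $\mathbf{\Sigma}_t^{(j)}=(\tau_t^{(j)})^2\mathbf{I}$ is in fact more explicit than anything the paper provides (the analogous Lemma~\ref{lemma2} proof merely points to Appendix~B of \cite{lliu2018}), and it is sound given the isotropic Rayleigh model and the form of the denoiser in (\ref{denoiser}).
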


Hence, upon the AMP algorithm converges, an estimate of $\{\mathbf{h}_{n}^{(j)}\}$, denoted as $\{\hat{\mathbf{h}}_{n}^{(j)}\}$, can be obtained as $\hat{\mathbf{h}}_{n}^{(j)}=\mathbf{h}_{n,\infty}^{(j)}$. By further applying a thresholding operation, the set of active users is determined as $\hat{\mathcal{K}}^{(j)}=\{i \in \mathcal{N}| \hat{u}_{i}^{(j)}=1\}$, with $\hat{u}_{i}^{(j)}$ given as follows:
\begin{align}
\hat{u}_{i}^{(j)}=\left\{\begin{array}{l}
1, \text { if }\lVert\mathbf{h}_{n,\infty}^{(j)}+\left(\mathbf{R}_{\infty}^{(j)}\right)^{\mathrm{H}}\mathbf{p}_{n}\rVert^{2}\geq l^{(j)} \\
0, \text { if }\lVert\mathbf{h}_{n,\infty}^{(j)}+\left(\mathbf{R}_{\infty}^{(j)}\right)^{\mathrm{H}}\mathbf{p}_{n}\rVert^{2}< l^{(j)}
\end{array}\right.,
\label{activeamp}
\end{align}

\noindent where $l^{(j)}\triangleq M \ln \left(1+\frac{\beta}{(\tau_{\infty}^{(j)})^{2}}\right) /\left(\frac{1}{(\tau_{\infty}^{(j)})^{2}}-\frac{1}{(\tau_{\infty}^{(j)})^{2}+\beta}\right)$ and $\tau_{\infty}^{(j)}$ is a fixed point of the state that the AMP algorithm converges to. 

\subsection{Data Detection and Decoding}
In the second phase of the $j$-th transmission block, the AMP-based receiver first applies a low-complexity zero-forcing (ZF) equalizer to the received data signal as follows:
\begin{align}
\hat{\mathbf{W}}^{(j)}= \left[\{\hat{\mathbf{w}}_{k}^{\left(j\right)}\}_{k\in\hat{\mathcal{K}}^{\left(j\right)}}\right]=\hat{\mathbf{H}}^{(j)}\left((\hat{\mathbf{H}}^{(j)})^{\mathrm{H}}\hat{\mathbf{H}}^{(j)}\right)^{-1},
\end{align}

\noindent which effectively eliminates the inter-user interference with sufficient antennas at the BS, i.e., $M\geq |\hat{\mathcal{K}}^{(j)}|$. Therefore, the equalized data symbols of a user in $\hat{\mathcal{K}}^{(j)}$ can be written as follows:
\begin{align}
\begin{aligned}
\hat{\mathbf{s}}_{k}^{(j)}&=(\hat{\mathbf{w}}_{k}^{(j)})^{\mathrm{H}}\mathbf{Y}^{(j)}= (\hat{\mathbf{w}}_{k}^{(j)})^{\mathrm{H}} \left(\sum_{n \in \mathcal{K}^{(j)}} \mathbf{h}_{n}^{(j)} \mathbf{s}_{n}^{\mathrm{T}}+\mathbf{N}^{(j)}\right)= (\hat{\mathbf{w}}_{k}^{(j)})^{\mathrm{H}} \left(\sum_{n \in \mathcal{K}^{j}} \left(\hat{\mathbf{h}}_{n}^{(j)}+\Delta \mathbf{h}_{n}^{(j)}\right) \mathbf{s}_{n}^{\mathrm{T}}+\mathbf{N}^{(j)}\right)\\
&=\underbrace{(\hat{\mathbf{w}}_{k}^{(j)})^{\mathrm{H}} \hat{\mathbf{h}}_{k}^{(j)}\mathbf{s}_{k}^{\mathrm{T}}}_{\text{Signal}}+\underbrace{(\hat{\mathbf{w}}_{k}^{(j)})^{H}\sum_{n \in \mathcal{K}^{(j)},n \neq k} \hat{\mathbf{h}}_{n}^{(j)}\mathbf{s}_{n}^{\mathrm{T}}}_{\text{Inter-user interference}}+\underbrace{(\hat{\mathbf{w}}_{k}^{(j)})^{\mathrm{H}}\sum_{n \in \mathcal{K}^{(j)}} \Delta \mathbf{h}_{n}^{(j)}\mathbf{s}_{n}^{\mathrm{T}}}_{\text{Channel\ estimation\ error}}+\underbrace{(\hat{\mathbf{w}}_{k}^{(j)})^{\mathrm{H}}\mathbf{N}^{(j)}}_{\text{Noise}}, k\in\hat{\mathcal{K}}^{(j)},
\end{aligned}
\end{align}


\noindent where $\hat{\mathbf{H}}^{(j)}\triangleq \left[\{ {\hat{\mathbf{h}}_{i}^{(j)}}\}_{ i \in \hat{\mathcal{K}}^{(j)}} \right]$ and $\Delta \mathbf{h}_{n}^{(j)}\triangleq \mathbf{h}_{n}^{(j)}-\hat{\mathbf{h}}_{n}^{(j)}$ denotes the channel estimation error. The estimated codewords are passed to the channel decoder to recover the information bits.

However, although the AMP-based receiver can be recycled for multiple transmission blocks, it overlooks the critical user activity correlation among different transmission blocks, which can be exploited to improve the communication performance. In the next section, we develop a novel receiver for grant-free massive RA systems with retransmission.


\section{Receiver Optimization for Retransmission Blocks} \label{sectioniv}
In the retransmission blocks, the activity status of a user in $\mathcal{K}$ does not change unless its data has been correctly decoded in a previous transmission block. Therefore, the user activity correlation among different transmission blocks can be leveraged to optimize the AMP-based receiver. In this section, a new activity-correlation-aware receiver for the retransmission blocks is developed based on the correlated AMP algorithm, where the historical channel estimation results can be effectively utilized. In the sequel, we use \emph{correlated AMP-based receiver} interchangeably with \emph{activity-correlation-aware receiver}.

\subsection{Correlated AMP for JADCE} \label{IV.A}
We first analyze the activity pattern of a user in adjacent transmission blocks, i.e., the $(j-1)$-th and $j$-th transmission block ($j = 2, \cdots, J$), which belongs to one of the following three cases:

\textbf{Case 1}: A user $n$ does not transmit in both transmission blocks, i.e., $u_{n}^{(j-1)}=0$ and $u_{n}^{(j)}=0$. This indicates that the user is either inactive or its data has been successfully decoded in one of the first $j-2$ transmission blocks.


\textbf{Case 2}: A user $n$ transmits in the $(j-1)$-th transmission block but not in the $j$-th block, i.e., $u_{n}^{(j-1)}=1$ and $u_{n}^{(j)}=0$, which means it is an active user and its data is decoded successfully in the $(j-1)$-th transmission block.

\textbf{Case 3}: A user $n$ transmits in both transmission blocks, i.e., $u_{n}^{(j-1)}=1$ and $u_{n}^{(j)}=1$, which means it is an active user but its data has not been decoded successfully in the first $j-1$ transmission blocks.

Since the statistical channel inversion power control is applied, given the numbers of active users in the first $j$ transmission blocks, the above three cases occur with probability $\epsilon_1^{(j)}=\frac{N-K^{(j-1)}}{N}$, $\epsilon_2^{(j)}=\frac{K^{(j-1)}-K^{(j)}}{N}$, and $\epsilon_3^{(j)}=\frac{K^{(j)}}{N}$, respectively, which provides additional prior information for JADCE in the AMP-based receiver. However, it is generally hard to incorporate such user activity correlation into the conventional AMP algorithm due to its memoryless nature. Fortunately, the AMP framework with side information (AMP-SI) developed in \cite{annama2019} provides a promising framework, which integrates the historical information (HI) into the denoiser of the AMP algorithm and thus endows it with memory. Driven by this framework, we propose the correlated AMP algorithm for JADCE in the retransmission blocks, where the HI is judiciously designed based on the user activity correlation and channel estimation results in the previous transmission block.



We consider the non-trivial case where the data packets of at least one active users in $\mathcal{K}$ have not been successfully decoded in the $j$-th ($j\geq 2$) transmission block, i.e., $K^{(j)}\neq 0$. The correlated AMP algorithm starts from $\mathbf{H}_{0}^{(j)}=\mathbf{0}$ and $\mathbf{R}_{0}^{(j)}=\mathbf{Y}_{p}^{(j)}$, and iterates as follows:
\begin{align}
    \mathbf{h}_{n, t+1}^{(j)}=\eta_{t}^{j}\left(\mathbf{h}_{n, t}^{(j)}+\left(\mathbf{R}_{t}^{(j)}\right)^{\mathrm{H}} \mathbf{p}_{n}, \tilde{\mathbf{h}}_{n}^{(j-1)}\right),
\label{esthsi}
\end{align}
\begin{align}
\mathbf{R}_{t+1}^{(j)} =\mathbf{Y}_{p}^{(j)}-\mathbf{P} \mathbf{H}_{t+1}^{(j)}+\frac{N}{L} \mathbf{R}_{t}^{(j)}\sum_{n=1}^{N}\frac{\eta_{t}^{\prime (j)}\left(\mathbf{h}_{n, t}^{(j)}+\left(\mathbf{R}_{t}^{(j)}\right)^{\mathrm{H}} \mathbf{p}_{n}, \tilde{\mathbf{h}}_{n}^{(j-1)}\right)}{N},
\end{align}

\noindent where $\tilde{\mathbf{h}}_{n}^{(j-1)}$ denotes the HI derived from the channel estimation results of the AMP algorithm executed in the first transmission block (for $j=2$) or the correlated AMP algorithm executed in the $(j-1)$-th transmission block (for $j \geq 3$). Besides, $\eta_{t}^{(j)}(\cdot,\cdot)$ denotes the denoiser, and $\eta_{t}^{\prime (j)}(\cdot,\cdot)$ is the ﬁrst-order derivative of $\eta_{t}^{(j)}(\cdot,\cdot)$. In contrast to AMP, the denoiser of correlated AMP applies to both the state and HI.

\subsection{State Evolution}
The state evolution of the correlated AMP algorithm bears similar properties as the AMP algorithm, which can be expressed in the following lemma. 

\begin{lemma} \label{lemma2}
Define two random variables as $\mathbf{X}^{(j)}$ and $\mathbf{V}^{(j)}$ with distributions $(1-\lambda^{(j)}) \delta_{0}+\lambda^{(j)} \mathcal{CN}(0,\beta\mathbf{I})$ and $\mathcal{CN}(0,\mathbf{I})$, respectively. In the asymptotic regime where $N,L,K^{(j)} \rightarrow \infty$, the state evolution of the correlated AMP algorithm is expressed as follows:
\begin{align}
\mathbf{\Sigma}^{(j)}_{t+1}=\frac{\sigma^{2}}{L} \mathbf{I}+\frac{N}{L} \mathbb{E}_{\mathbf{X}^{(j)},\mathbf{V}^{(j)}}\left[\Big|\Big|\eta^{(j)}_{t}(\mathbf{X}^{(j)}+(\mathbf{\Sigma}^{(j)}_{t})^{\frac{1}{2}} \mathbf{V}^{(j)},\tilde{\mathbf{X}}^{(j-1)})-\mathbf{X}^{(j)}\Big|\Big|_{2}^{2}\right],
\end{align}

\noindent where $\tilde{\mathbf{X}}^{(j-1)}$ is a random variable that captures the distribution of $\tilde{\mathbf{h}}_{n}^{(j-1)}$. It can be expressed as $\tilde{\mathbf{X}}^{(j-1)}=\mathbf{X}^{(j-1)}+(\mathbf{\Sigma}^{(j-1)}_{\infty})^{\frac{1}{2}}\mathbf{V}^{(j-1)}$ with $\mathbf{X}^{(j-1)}\sim (1-\lambda^{(j-1)}) \delta_{0}+\lambda^{(j-1)} \mathcal{CN}(0,\beta\mathbf{I})$ and $\mathbf{V}^{(j-1)} \sim \mathcal{CN}(0,\mathbf{I})$. Besides, $\mathbf{\Sigma}^{(j)}_{t}$ denotes the state matrix in the $t$-th iteration of the correlated AMP algorithm. When the MMSE denoiser is applied, the state matrix of the correlated AMP algorithm stays as a scaled identity matrix, i.e., $\mathbf{\Sigma}^{(j)}_{t}=(\tau^{(j)}_{t})^{2}\mathbf{I}$.
\end{lemma}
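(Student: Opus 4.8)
The plan is to follow the same two-pronged strategy used in the proof of Lemma~\ref{lemma1} (i.e., [13, Proposition 9]), adapting it to accommodate the side-information input of the denoiser, and then to invoke the AMP-SI state evolution machinery of \cite{annama2019}. The starting observation is that the correlated AMP iteration \eqref{esthsi} is precisely an instance of the AMP-SI recursion of \cite{annama2019}: the measurement matrix is $\sqrt{L}\mathbf{P}$ with i.i.d.\ $\mathcal{CN}(0,1/L)$ entries, the signal to be recovered is the Bernoulli--Gaussian vector $\mathbf{H}^{(j)}$ with sparsity level $\lambda^{(j)}=K^{(j)}/N$, and the side information fed to the denoiser is $\tilde{\mathbf{h}}_n^{(j-1)}$, which (by the convergence of the AMP / correlated AMP algorithm in block $j-1$, together with Lemma~\ref{lemma1} applied at block $j-1$) is distributionally equivalent to $\mathbf{X}^{(j-1)}+(\mathbf{\Sigma}^{(j-1)}_\infty)^{1/2}\mathbf{V}^{(j-1)}$, independent of the fresh noise and measurement randomness in block $j$. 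Hence the first step is to verify the regularity hypotheses of the AMP-SI state evolution theorem — pseudo-Lipschitz continuity of the denoiser $\eta_t^{(j)}(\cdot,\cdot)$ in its first argument and the standard moment conditions on the signal, noise, and side-information distributions — so that the result of \cite{annama2019} applies verbatim and yields the scalar/matrix recursion $\mathbf{\Sigma}^{(j)}_{t+1}=\frac{\sigma^2}{L}\mathbf{I}+\frac{N}{L}\,\mathbb{E}\big[\|\eta_t^{(j)}(\mathbf{X}^{(j)}+(\mathbf{\Sigma}^{(j)}_t)^{1/2}\mathbf{V}^{(j)},\tilde{\mathbf{X}}^{(j-1)})-\mathbf{X}^{(j)}\|_2^2\big]$.

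The second step is to establish the claimed structural simplification: that under the MMSE denoiser the state matrix stays a scaled identity, $\mathbf{\Sigma}^{(j)}_t=(\tau^{(j)}_t)^2\mathbf{I}$. I would argue this by induction on $t$, exactly mirroring the argument for Lemma~\ref{lemma1}. The base case $\mathbf{\Sigma}^{(j)}_1=\frac{\sigma^2}{L}\mathbf{I}+\frac{N}{L}\lambda^{(j)}\beta\,\mathbf{I}$ is immediate since $\mathbf{H}^{(j)}_0=\mathbf{0}$. For the inductive step, assume $\mathbf{\Sigma}^{(j)}_t=(\tau^{(j)}_t)^2\mathbf{I}$; then the effective observation $\mathbf{X}^{(j)}+\tau^{(j)}_t\mathbf{V}^{(j)}$ and the side information $\tilde{\mathbf{X}}^{(j-1)}$ both have rotationally invariant (isotropic) distributions across the $M$ antenna coordinates — the former because $\mathbf{V}^{(j)}\sim\mathcal{CN}(0,\mathbf{I}_M)$ and $\mathbf{X}^{(j)}$ is isotropic Gaussian conditioned on support, the latter because $\mathbf{\Sigma}^{(j-1)}_\infty=(\tau^{(j-1)}_\infty)^2\mathbf{I}$ by the induction over transmission blocks. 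The MMSE denoiser is the conditional expectation $\mathbb{E}[\mathbf{X}^{(j)}\mid \mathbf{X}^{(j)}+\tau^{(j)}_t\mathbf{V}^{(j)},\tilde{\mathbf{X}}^{(j-1)}]$; by the equivariance of conditional expectation under the joint unitary action on $(\mathbf{X}^{(j)},\mathbf{V}^{(j)},\mathbf{V}^{(j-1)})$, the estimation-error covariance $\mathbb{E}[(\eta_t^{(j)}-\mathbf{X}^{(j)})(\eta_t^{(j)}-\mathbf{X}^{(j)})^{\mathrm{H}}]$ commutes with every $M\times M$ unitary, hence is a scalar multiple of $\mathbf{I}_M$ by Schur's lemma. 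Feeding this back through the recursion preserves the scaled-identity form and defines $(\tau^{(j)}_{t+1})^2$.

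The main obstacle is making the independence and distributional claim for the side information $\tilde{\mathbf{X}}^{(j-1)}$ fully rigorous: the HI $\tilde{\mathbf{h}}_n^{(j-1)}$ is a deterministic function of the block-$(j-1)$ pilot observations, which share the \emph{same} pilot matrix $\mathbf{P}$ and the \emph{same} user supports $\{u_n\}$ as block $j$, so it is not literally independent of the block-$j$ iteration. The resolution — and this is the crux that must be spelled out carefully — is that in the asymptotic regime the small-scale fading $\boldsymbol{\alpha}_n^{(j)}$ is drawn fresh and independently in each block, so that \emph{conditionally on the support pattern} the block-$(j-1)$ channel $\mathbf{X}^{(j-1)}$ and its AMP-induced effective noise are independent of the fresh block-$j$ noise $\mathbf{V}^{(j)}$; the shared randomness in $\mathbf{P}$ washes out in state evolution because the per-coordinate MSE depends on $\mathbf{P}$ only through the asymptotic Marchenko--Pastur-type limit $N/L$, which is deterministic. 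Thus I would phrase the side-information distribution as a conditional law given the (deterministic, known-at-BS) activity counts $K^{(1)},\dots,K^{(j)}$, and cite the decoupling/conditioning lemma underlying the AMP-SI analysis in \cite{annama2019} to justify replacing the empirical HI by the scalar equivalent channel $\tilde{\mathbf{X}}^{(j-1)}=\mathbf{X}^{(j-1)}+(\mathbf{\Sigma}^{(j-1)}_\infty)^{1/2}\mathbf{V}^{(j-1)}$. Everything else is a routine transcription of the single-block proof.
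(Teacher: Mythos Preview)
Your proposal is correct and follows essentially the same approach as the paper: the paper's proof is a one-line pointer to Appendix~B of \cite{lliu2018}, remarking that the only change is to replace the single-argument MMSE denoiser $\eta_t^{(j)}(\cdot)$ by the two-argument version $\eta_t^{(j)}(\cdot,\cdot)$. Your write-up is considerably more detailed---you explicitly invoke the AMP-SI machinery of \cite{annama2019}, spell out the rotational-invariance induction for the scaled-identity state, and flag the shared-$\mathbf{P}$ independence subtlety that the paper does not address---but the underlying strategy is the same.
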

\begin{proof}
This lemma can be proved by following similar lines in Appendix B of \cite{lliu2018}, with the main difference of replacing the MMSE denoiser $\eta_{t}^{(j)}(\cdot)$ in the AMP algorithm with $\eta_{t}^{(j)}(\cdot,\cdot)$ in the correlated AMP algorithm.
\end{proof}

The main difference in the state evolution between the conventional AMP and correlated AMP algorithm lies on the denoiser, since correlated AMP takes the HI as additional input. In the following subsection, we elaborate the designs of the HI and the HI-assisted denoiser for the correlated AMP algorithm.

\subsection{Historical Information and Denoiser Design}
We propose to derive the HI $\tilde{\mathbf{h}}_{n}^{(j-1)}$ that connects $\hat{\mathbf{h}}_{n}^{(j-1)}$ with the actual effective channel coefficients $\mathbf{h}_{n}^{(j)}$ in the $j$-th transmission block. As shown in Fig. \ref{HI}, since the user activity correlation between adjacent transmission blocks has been obtained in Section \ref{IV.A}, the dependency of the actual effective channel coefficients in adjacent transmission blocks, i.e., $\mathbf{h}_{n}^{(j-1)}$ and $\mathbf{h}_{n}^{(j)}$, can be established accordingly. Hence, it remains to link $\hat{\mathbf{h}}_{n}^{(j-1)}$ with $\mathbf{h}_{n}^{(j-1)}$. For this purpose, the state evolution of the AMP/correlated AMP algorithm, which tracks the mean square estimation error, provides the statistical hint. 
\begin{figure}[htbp]
\centering
\includegraphics[width=5in]{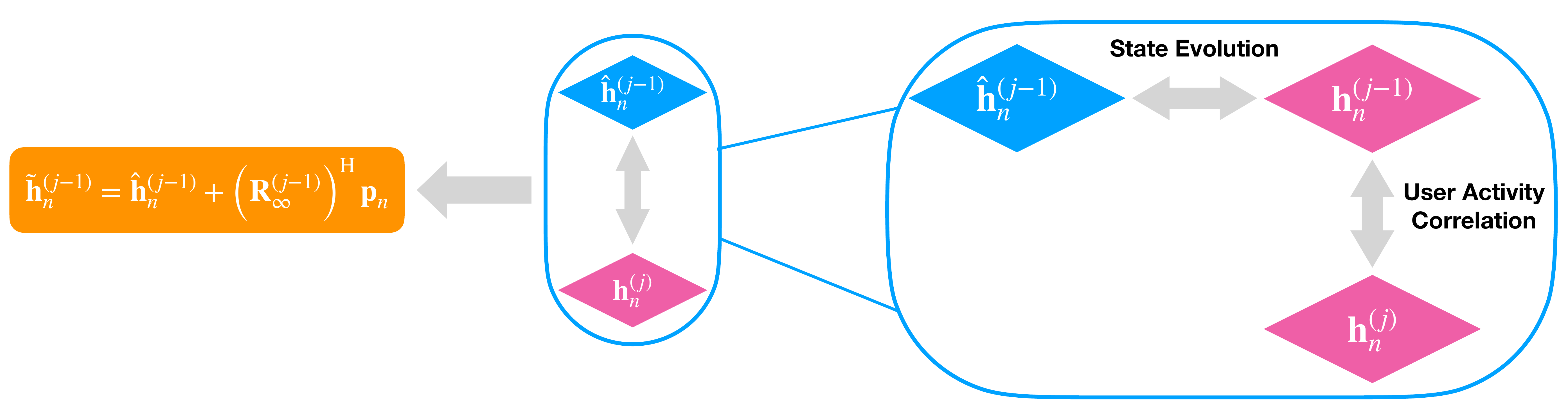}
\caption{Relationship of different variables to facilitate the historical information (HI) design.}
\vspace{-0.3cm}
\label{HI}
\end{figure}

Specifically, the state evolution in each iteration of the AMP/correlated AMP algorithm indicates that in the statistical sense, applying the denoiser $\eta^{(j-1)}_{t}(\cdot)$/$\eta^{(j-1)}_{t}(\cdot,\cdot)$ to $\mathbf{h}^{(j-1)}_{n,t}+(\mathbf{R}^{(j-1)}_{t})^{\mathrm{H}}\mathbf{p}_{n}$ (see (\ref{esth})/(\ref{esthsi})) is equivalent to applying it to $\mathbf{h}^{(j-1)}_{n}+(\mathbf{\Sigma}^{(j-1)}_{t})^{\frac{1}{2}}\mathbf{v}^{(j-1)}_{n}$ \cite{annama2019,mba2011}, where $\mathbf{v}^{(j-1)}_{n}\sim \mathcal{CN}(\mathbf{0},\mathbf{I})$ is a noise term independent of $\mathbf{h}_{n}^{(j-1)}$. Since $\hat{\mathbf{h}}_{n}^{(j-1)}=\mathbf{h}_{n,\infty}^{(j-1)}$, we can establish the statistical relationship between $\hat{\mathbf{h}}_{n}^{(j-1)}$ and $\mathbf{h}_{n}^{(j-1)}$ as $\hat{\mathbf{h}}_{n}^{(j-1)}=\mathbf{h}^{(j-1)}_{n}+(\mathbf{\Sigma}^{(j-1)}_{\infty})^{\frac{1}{2}}\mathbf{v}^{(j-1)}_{n}$, where $\mathbf{\Sigma}^{(j-1)}_{\infty}$ denotes the state of the AMP/correlated AMP algorithm upon it converges. Therefore, we design the HI for the correlated AMP algorithm as follows:
\begin{align}
\tilde{\mathbf{h}}_{n}^{(j-1)}=\hat{\mathbf{h}}_{n}^{(j-1)}+\left(\mathbf{R}_{\infty}^{(j-1)}\right)^{\mathrm{H}}\mathbf{p}_{n}.
\label{SI}
\end{align}

Based on the proposed HI in (\ref{SI}), we next design the denoiser $ \eta_{t}^{(j)}(\cdot,\cdot)$. We still adopt the MMSE denoiser for its superior performance and elegant theoretical properties. Define two auxiliary terms $\mathbf{a}_{n,t}^{(j)} \triangleq \mathbf{h}^{(j)}_{n}+\tau^{(j)}_{t}\mathbf{v}^{(j)}_{n}$ and $\mathbf{b}_{n,\infty}^{(j-1)}\triangleq \mathbf{h}^{(j-1)}_{n}+\tau^{(j-1)}_{\infty}\mathbf{v}^{(j-1)}_{n}$, where the distributions of $\mathbf{a}_{n,t}^{(j)}$, $\mathbf{b}_{n,\infty}^{(j-1)}$, $\mathbf{v}^{(j)}_{n}$, and $\mathbf{v}^{(j-1)}_{n}$ are respectively captured by random vectors $\mathbf{X}^{(j)}+(\mathbf{\Sigma}^{(j)}_{t})^{\frac{1}{2}}\mathbf{V}^{(j)}$, $\tilde{\mathbf{X}}^{(j-1)}$, $\mathbf{V}^{(j)}$, and $\mathbf{V}^{(j-1)}$. The MMSE denoiser can be obtained in the following theorem.

\begin{theorem} \label{theorem1}
The MMSE denoiser of the correlated AMP algorithm is given by the conditional expectation $\mathbb{E}[\mathbf{X}^{(j)}|\mathbf{X}^{(j)}+(\mathbf{\Sigma}^{(j)}_{t})^{\frac{1}{2}}\mathbf{V}^{(j)}=\mathbf{a}_{n,t}^{(j)}, \tilde{\mathbf{X}}^{(j-1)}=\mathbf{b}_{n,\infty}^{(j-1)}]$ as follows:
\begin{align}
    \eta_{t}^{(j)}(\mathbf{a}_{n,t}^{(j)},\mathbf{b}_{n,\infty}^{(j-1)})=\frac{\frac{\beta\mathbf{a}_{n,t}^{(j)}}{\beta+(\tau_t^{(j)})^2}}{1+\Phi_{n,t,1}^{(j)}\times \frac{\epsilon_2^{(j)}+\epsilon_1^{(j)}\Phi_{n,2}^{(j-1)}}{\epsilon_3^{(j)}\Phi_{n,2}^{(j-1)}}},
\label{denoiserampsi}
\end{align}
\noindent where 
\vspace{-0.3cm}
\begin{align}
    \Phi_{n,t,1}^{(j)}=\left(\frac{\beta+(\tau_t^{(j)})^2}{(\tau_t^{(j)})^2}\right)^M e^{\left(\frac{1}{\beta+(\tau_t^{(j)})^2}-\frac{1}{(\tau_t^{(j)})^2}\right)||\mathbf{a}_{n,t}^{(j)}||^2},
\end{align}
\vspace{-0.3cm}
\begin{align}
    \Phi_{n,2}^{(j-1)}=\left(\frac{\beta+(\tau_{\infty}^{(j-1)})^2}{(\tau_{\infty}^{(j-1)})^2}\right)^M e^ {\left(\frac{1}{\beta+(\tau_{\infty}^{(j-1)})^2}-\frac{1}{(\tau_{\infty}^{(j-1)})^2}\right)||\mathbf{b}_{n,\infty}^{(j-1)}||^2}.
\end{align}
\end{theorem}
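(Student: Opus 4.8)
The plan is to compute the conditional expectation $\mathbb{E}[\mathbf{X}^{(j)}\mid \mathbf{X}^{(j)}+(\mathbf{\Sigma}^{(j)}_{t})^{1/2}\mathbf{V}^{(j)}=\mathbf{a}_{n,t}^{(j)},\ \tilde{\mathbf{X}}^{(j-1)}=\mathbf{b}_{n,\infty}^{(j-1)}]$ directly from Bayes' rule, exploiting the three-case decomposition of the joint prior on $(\mathbf{X}^{(j-1)},\mathbf{X}^{(j)})$ established in Section \ref{IV.A}. First I would write down the joint law of $(\mathbf{X}^{(j)},\mathbf{X}^{(j-1)})$: by the activity-correlation analysis, with probability $\epsilon_1^{(j)}$ both are zero (Case 1), with probability $\epsilon_2^{(j)}$ we have $\mathbf{X}^{(j-1)}\sim\mathcal{CN}(0,\beta\mathbf{I})$ and $\mathbf{X}^{(j)}=0$ (Case 2), and with probability $\epsilon_3^{(j)}$ both are independent $\mathcal{CN}(0,\beta\mathbf{I})$ (Case 3). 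Since the statistical channel inversion power control makes the effective channels in the two blocks independent given that the user is active in both, this is a clean finite mixture.

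Next I would introduce the observation model: conditioned on $\mathbf{X}^{(j)}$, the first argument $\mathbf{a}_{n,t}^{(j)}$ is $\mathbf{X}^{(j)}$ plus $\mathcal{CN}(0,(\tau_t^{(j)})^2\mathbf{I})$ noise (using $\mathbf{\Sigma}^{(j)}_{t}=(\tau_t^{(j)})^2\mathbf{I}$ from Lemma \ref{lemma2}); conditioned on $\mathbf{X}^{(j-1)}$, the second argument $\mathbf{b}_{n,\infty}^{(j-1)}=\tilde{\mathbf{X}}^{(j-1)}$ is $\mathbf{X}^{(j-1)}$ plus $\mathcal{CN}(0,(\tau_\infty^{(j-1)})^2\mathbf{I})$ noise. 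Crucially, $\mathbf{a}_{n,t}^{(j)}$ depends only on $\mathbf{X}^{(j)}$ (not on $\mathbf{X}^{(j-1)}$) and $\mathbf{b}_{n,\infty}^{(j-1)}$ depends only on $\mathbf{X}^{(j-1)}$, so the two likelihoods factorize within each mixture component. I would then evaluate the posterior weight of each of the three cases by integrating out $\mathbf{X}^{(j-1)}$ and $\mathbf{X}^{(j)}$ against the Gaussian/Dirac priors — this is the standard Gaussian-vs-point-mass evidence computation that already appears in the derivation of the plain MMSE denoiser \eqref{denoiser}. The ratio of a zero-mean Gaussian density with variance $\tau^2\mathbf{I}$ to one with variance $(\beta+\tau^2)\mathbf{I}$, evaluated at a point of norm $\|\cdot\|$, yields exactly the factors $\Phi_{n,t,1}^{(j)}$ and $\Phi_{n,2}^{(j-1)}$; for $\mathbf{b}_{n,\infty}^{(j-1)}$ note it enters as the ratio of Case-2/Case-1 evidence to Case-3 evidence, which is why only $1/\Phi_{n,2}^{(j-1)}$ appears.

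Then I would observe that $\mathbb{E}[\mathbf{X}^{(j)}\mid\text{Case }1]=\mathbb{E}[\mathbf{X}^{(j)}\mid\text{Case }2]=0$ since $\mathbf{X}^{(j)}=0$ in both those cases, so only Case 3 contributes a nonzero conditional mean, namely the linear MMSE estimate $\frac{\beta}{\beta+(\tau_t^{(j)})^2}\mathbf{a}_{n,t}^{(j)}$. Assembling $\eta_t^{(j)}(\mathbf{a}_{n,t}^{(j)},\mathbf{b}_{n,\infty}^{(j-1)})=\big(\text{posterior weight of Case 3}\big)\times\frac{\beta\,\mathbf{a}_{n,t}^{(j)}}{\beta+(\tau_t^{(j)})^2}$ and simplifying the weight to $\big(1+\Phi_{n,t,1}^{(j)}\frac{\epsilon_2^{(j)}+\epsilon_1^{(j)}\Phi_{n,2}^{(j-1)}}{\epsilon_3^{(j)}\Phi_{n,2}^{(j-1)}}\big)^{-1}$ gives \eqref{denoiserampsi}. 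The main obstacle — and the only place care is genuinely needed — is bookkeeping the evidence ratios correctly across the three mixture components: making sure the $\mathbf{b}_{n,\infty}^{(j-1)}$-likelihood is marginalized over the correct prior in each case (point mass at $0$ in Case 1, $\mathcal{CN}(0,\beta\mathbf{I})$ in Cases 2 and 3) and that the common Gaussian normalization constants cancel so that only the ratio forms $\Phi$ survive. Everything else is the routine complex-Gaussian completion-of-the-square already used for \eqref{denoiser}.
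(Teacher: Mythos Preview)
Your proposal is correct and follows essentially the same route as the paper's proof in Appendix~\ref{appendixa}: expand the conditional expectation over the three activity cases, observe that Cases~1 and~2 contribute zero so only the Case~3 posterior weight times the linear MMSE estimate $\frac{\beta\mathbf{a}_{n,t}^{(j)}}{\beta+(\tau_t^{(j)})^2}$ survives, and compute that weight via Bayes' rule using the factorized Gaussian likelihoods $p(\mathbf{a}_{n,t}^{(j)},\mathbf{b}_{n,\infty}^{(j-1)}\mid\text{Case }i)$. Your emphasis on the bookkeeping of which prior (point mass versus $\mathcal{CN}(0,\beta\mathbf{I})$) governs $\mathbf{b}_{n,\infty}^{(j-1)}$ in each case is exactly the place where care is required, and it matches the paper's derivation of the evidence ratios $\Phi_{n,t,1}^{(j)}$ and $\Phi_{n,2}^{(j-1)}$.
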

\begin{proof} Please refer to Appendix \ref{appendixa}.

\end{proof}
\vspace{-0.3cm}
In the above MMSE denoiser, we make use of both the user activity correlation between adjacent transmission blocks and the channel estimation results obtained in the previous transmission block. Notice that the term $\frac{\epsilon_2^{(j)}+\epsilon_1^{(j)}\Phi_{n,2}^{(j-1)}}{\epsilon_3^{(j)}\Phi_{n,2}^{(j-1)}}$ is parameterized by the user activity correlation. If $\epsilon_1^{(j)}=\frac{N-K^{(j)}}{N}$, $\epsilon_2^{(j)} = 0 $, and $\epsilon_3^{(j)} = \frac{K^{(j)}}{N}$, the user activity in the $(j-1)$-th transmission block is not considered when estimating the effective channel coefficients in the $j$-th transmission block. This can be interpreted as merging Case 1 and Case 2 presented in Section \ref{IV.A} into a new case that a user $n$ does not transmit in the $j$-th transmission block ($u_{n}^{(j)}=0$), while Case 3 implies a user $n$ transmits in the $j$-th transmission block ($u_{n}^{(j)}=1$). In other words, the MMSE denoiser of the correlated AMP algorithm degenerates to that of the AMP algorithm.

\subsection{User Activity Detection}
In the first transmission block, upon the AMP algorithm converges, the active user set can be obtained from (\ref{activeamp}). In the retransmission blocks, the set of active users is obtained according to the following theorem.

\begin{theorem} \label{theorem2}
When the correlated AMP algorithm converges, the set of active users in the $j$-th ($j>1$) transmission block, i.e., $\hat{\mathcal{K}}^{(j)}$, is estimated based on the log-likelihood ratio test as follows:
\begin{align}
\hat{u}^{(j)}_{n}=\left\{\begin{array}{l}
1, \text { if }\lVert\mathbf{h}^{(j)}_{n,\infty}+\left(\mathbf{R}^{(j)}_{\infty}\right)^{\mathrm{H}}\mathbf{p}_{n}\rVert^{2}\geq l_{r}^{(j)} \\
0, \text { if }\lVert\mathbf{h}^{(j)}_{n,\infty}+\left(\mathbf{R}^{(j)}_{\infty}\right)^{\mathrm{H}}\mathbf{p}_{n}\rVert^{2}< l_{r}^{(j)}
\end{array}\right., n \in \mathcal{N},
\label{activeampsi}
\end{align}

\noindent where
\begin{align}
l_{r}^{(j)}\triangleq \frac{M \ln \left(1+\frac{\beta}{(\tau^{(j)}_{\infty})^{2}}\right)+\ln\left(\frac{\frac{\epsilon_{1}^{(j)}}{\epsilon_{1}^{(j)}+\epsilon_{2}^{(j)}}\Phi_{2}^{(j-1)}}{1-\frac{\epsilon_{2}^{(j)}}{\epsilon_{1}^{(j)}+\epsilon_{2}^{(j)}}\Phi_{2}^{(j-1)}}\right)}{\frac{1}{(\tau_{\infty}^{(j)})^2}-\frac{1}{(\tau_{\infty}^{(j)})^{2}+\beta}}
\end{align}

\noindent denotes the detection threshold, and $\Phi_{2}^{(j-1)}=\left(\frac{\beta+(\tau_{\infty}^{(j-1)})^2}{(\tau_{\infty}^{(j-1)})^2}\right)^M e^ {\left(\frac{1}{\beta+(\tau_{\infty}^{(j-1)})^2}-\frac{1}{(\tau_{\infty}^{(j-1)})^2}\right)\left(\beta\lambda^{(j-1)}+(\tau_{\infty}^{(j-1)})^{2}\right)}$.
\end{theorem}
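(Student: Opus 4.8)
The plan is to read (\ref{activeampsi}) as the maximum-a-posteriori (MAP) decision for the binary hypothesis test $\mathcal{H}_1:u_n^{(j)}=1$ against $\mathcal{H}_0:u_n^{(j)}=0$, performed jointly on the converged state $\mathbf{a}_{n,\infty}^{(j)}$ (the quantity $\mathbf{h}_{n,\infty}^{(j)}+(\mathbf{R}_\infty^{(j)})^{\mathrm{H}}\mathbf{p}_n$, which by Lemma~\ref{lemma2} is distributed as $\mathbf{X}^{(j)}+(\mathbf{\Sigma}_\infty^{(j)})^{\frac{1}{2}}\mathbf{V}^{(j)}$) and the HI $\tilde{\mathbf{h}}_n^{(j-1)}$ (distributed as $\mathbf{b}_{n,\infty}^{(j-1)}\sim\tilde{\mathbf{X}}^{(j-1)}$), in the same spirit as (\ref{activeamp}) for plain AMP but now using the \emph{joint} law of state and HI. Conditioned on the activity pattern of user $n$, the two vectors are independent isotropic complex Gaussians whose per-antenna variance in block $i$ is $\beta+(\tau_\infty^{(i)})^2$ if user $n$ is active there and $(\tau_\infty^{(i)})^2$ otherwise; partitioning over the three cases of Section~\ref{IV.A}, under $\mathcal{H}_1$ (Case~3, prior $\epsilon_3^{(j)}$) both vectors carry the ``active'' variance, while under $\mathcal{H}_0$ (prior $\epsilon_1^{(j)}+\epsilon_2^{(j)}$) the state $\mathbf{a}_{n,\infty}^{(j)}$ carries the ``inactive'' variance and the HI is a two-component Gaussian mixture with weights $\epsilon_1^{(j)}/(\epsilon_1^{(j)}+\epsilon_2^{(j)})$ on Case~1 and $\epsilon_2^{(j)}/(\epsilon_1^{(j)}+\epsilon_2^{(j)})$ on Case~2.

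I would then write the MAP rule --- declare $\hat{u}_n^{(j)}=1$ exactly when $\epsilon_3^{(j)}\,p(\mathbf{a}_{n,\infty}^{(j)},\mathbf{b}_{n,\infty}^{(j-1)}\,|\,\mathcal{H}_1)\ge(\epsilon_1^{(j)}+\epsilon_2^{(j)})\,p(\mathbf{a}_{n,\infty}^{(j)},\mathbf{b}_{n,\infty}^{(j-1)}\,|\,\mathcal{H}_0)$ --- and substitute the (mixture-of-)Gaussian densities above. This is the same likelihood bookkeeping that already yields the MMSE denoiser of Theorem~\ref{theorem1}, so the computation can be lifted from Appendix~\ref{appendixa}: every normalizing constant independent of the two vectors cancels, and the surviving dependence is only through $\|\mathbf{a}_{n,\infty}^{(j)}\|^2$ and $\|\mathbf{b}_{n,\infty}^{(j-1)}\|^2$, i.e. through $\Phi_{n,\infty,1}^{(j)}$ and $\Phi_{n,2}^{(j-1)}$. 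Taking logarithms turns the test into a linear inequality in $\|\mathbf{h}_{n,\infty}^{(j)}+(\mathbf{R}_\infty^{(j)})^{\mathrm{H}}\mathbf{p}_n\|^2$ with coefficient $\tfrac{1}{\beta+(\tau_\infty^{(j)})^2}-\tfrac{1}{(\tau_\infty^{(j)})^2}$; since this coefficient is \emph{negative}, dividing through flips the inequality to the ``$\ge$'' direction of (\ref{activeampsi}), the $M\ln\!\big(1+\beta/(\tau_\infty^{(j)})^2\big)$ term in the numerator of $l_r^{(j)}$ arising from the $\big((\beta+\tau^2)/\tau^2\big)^M$ prefactors, the logarithm of the prior-and-HI ratio from the $\epsilon_i^{(j)}$'s together with $\Phi_{n,2}^{(j-1)}$, and the denominator being the positive quantity $\tfrac{1}{(\tau_\infty^{(j)})^2}-\tfrac{1}{(\tau_\infty^{(j)})^2+\beta}$.

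Finally I would de-randomize the HI: the threshold reached so far still depends on the random $\|\mathbf{b}_{n,\infty}^{(j-1)}\|^2$ inside $\Phi_{n,2}^{(j-1)}$, but in the asymptotic regime of Lemma~\ref{lemma2} this norm concentrates (a law of large numbers over the $M$ antennas, using that the previous-block state is a scaled identity) around the per-antenna second moment of $\tilde{\mathbf{X}}^{(j-1)}$, namely $\beta\lambda^{(j-1)}+(\tau_\infty^{(j-1)})^2$; plugging in this deterministic value replaces $\Phi_{n,2}^{(j-1)}$ by $\Phi_2^{(j-1)}$ and yields the fixed threshold $l_r^{(j)}$. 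The step I expect to be the main obstacle is precisely this last one coupled with the $\mathcal{H}_0$-mixture: one must carry the two-component mixture and the cancellations among $\epsilon_1^{(j)},\epsilon_2^{(j)},\epsilon_3^{(j)}$ through the logarithm so that the prior-and-HI ratio reduces to the compact fraction written in $l_r^{(j)}$, and one must justify that substituting the limiting value of $\|\tilde{\mathbf{h}}_n^{(j-1)}\|^2$ makes the simple fixed-threshold test asymptotically equivalent to the exact MAP test rather than a mere heuristic; everything before it is a routine repetition of the Theorem~\ref{theorem1} computation evaluated at the convergent fixed point.
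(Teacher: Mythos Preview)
Your overall plan matches the paper's Appendix~\ref{appendixb} closely: set up a binary hypothesis test on the pair $(\mathbf{a}_{n,\infty}^{(j)},\mathbf{b}_{n,\infty}^{(j-1)})$, compute the (mixture-of-)Gaussian likelihoods via the three cases of Section~\ref{IV.A}, take logs and solve the resulting affine inequality in $\|\mathbf{a}_{n,\infty}^{(j)}\|^{2}$, and finally de-randomize the HI by replacing $\|\mathbf{b}_{n,\infty}^{(j-1)}\|^{2}$ with $\mathbb{E}\big[\|\tilde{\mathbf{X}}^{(j-1)}\|^{2}\big]=\beta\lambda^{(j-1)}+(\tau_{\infty}^{(j-1)})^{2}$. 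The paper performs this last step exactly as you propose, but without invoking concentration explicitly; it simply substitutes the expectation ``to further derive a common threshold,'' so your LLN justification is more careful than what the paper writes.

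There is, however, one concrete mismatch that would prevent you from landing on the stated $l_r^{(j)}$. You frame the detector as the \emph{MAP} rule, comparing $\epsilon_3^{(j)}\,p(\mathbf{a},\mathbf{b}\mid\mathcal{H}_1)$ against $(\epsilon_1^{(j)}+\epsilon_2^{(j)})\,p(\mathbf{a},\mathbf{b}\mid\mathcal{H}_0)$. The theorem statement (and the paper's proof) instead uses the \emph{unweighted} log-likelihood ratio test, i.e.\ it compares $\ln\big(p(\mathbf{a},\mathbf{b}\mid\mathcal{H}_1)/p(\mathbf{a},\mathbf{b}\mid\mathcal{H}_0)\big)$ to $0$, with no prior factor. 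Your MAP version would produce an extra additive term $\ln\!\big(\epsilon_3^{(j)}/(\epsilon_1^{(j)}+\epsilon_2^{(j)})\big)$ in the numerator of the threshold; the $\epsilon_3^{(j)}$ you expect to ``cancel'' in fact never enters $l_r^{(j)}$ at all. If you drop the prior weights and work with the pure likelihood ratio equal to $1$, the remaining algebra is exactly as you describe: the paper obtains $p(\mathbf{a},\mathbf{b}\mid\mathcal{H}_1)/p(\mathbf{a},\mathbf{b}\mid\mathcal{H}_0)=(\epsilon_1^{(j)}+\epsilon_2^{(j)})/\big(\Phi_{n,2}^{(j-1)}(\epsilon_1^{(j)}\Phi_{n,\infty,1}^{(j)}+\epsilon_2^{(j)})\big)$, sets this equal to $1$, and solves for $\|\mathbf{a}_{n,\infty}^{(j)}\|^{2}$, which yields the logarithmic fraction in $\epsilon_1^{(j)},\epsilon_2^{(j)},\Phi_2^{(j-1)}$ appearing in $l_r^{(j)}$.
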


\begin{proof} Please refer to Appendix \ref{appendixb}.

\end{proof}
\vspace{-0.3cm}
Once the JADCE results are obtained, the information bits of the set of users that are estimated as active can be detected and decoded as in the conventional AMP-based receiver. The performance of the correlated AMP-based receiver is analyzed in the following section. It is noteworthy that while we assume the active users simply retransmit the same coded bits in different transmission blocks in this paper, the proposed receiver can be easily extended when chase combining or incremental redundancy is activated in the retransmission blocks \cite{fjab2021}. The detailed procedures of the activity-correlation-aware receiver are summarized in Fig. \ref{rec}.
\begin{figure}[htbp]
\centering
\includegraphics[width=4.6in]{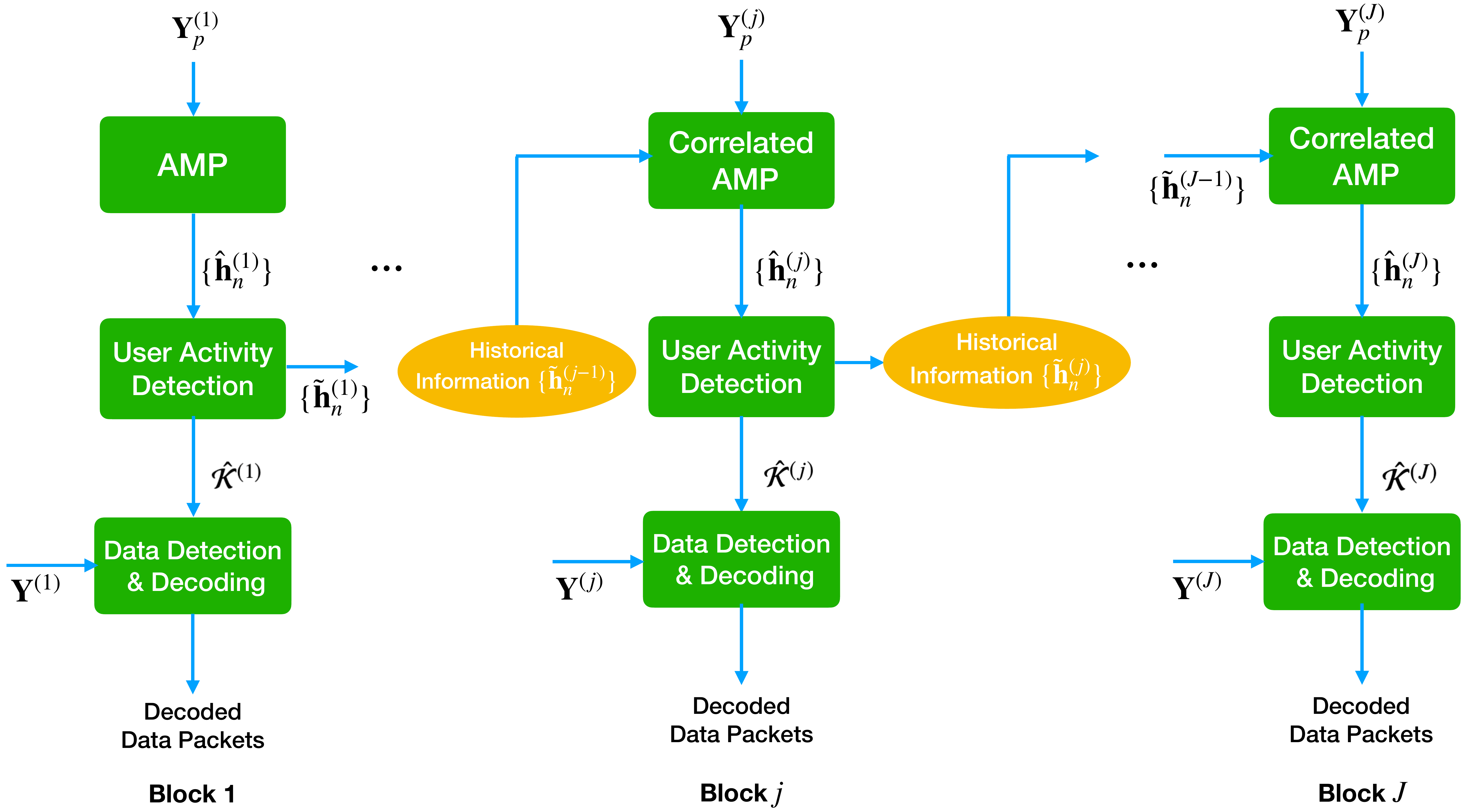}
\caption{Detailed procedures of the activity-correlation-aware receiver, where the correlated AMP algorithm is adopted for JADCE except in the first transmission block.}
\label{rec}
\end{figure}
\vspace{-0.3cm}
\section{Performance Analysis \label{sectionv}}
In this section, we first analyze the performance of the correlated AMP-based receiver in a single transmission block, including the activity detection and channel estimation error, and the BLER. Based on the analysis, we then derive the FER of grant-free massive RA systems with retransmission. Notice that by the design of the HI, the performance of the correlated AMP-based receiver in the $j$-th transmission block ($j>1$) is conditioned on the number of active users in the $(j-1)$-th and $j$-th transmission block, i.e., it should be parameterized by $K^{\left(j-1\right)}$ and $K^{\left(j\right)}$. For brevity, we omit $K^{\left(j-1\right)}$ and $K^{\left(j\right)}$ in the following derivations unless otherwise specified. 

\subsection{Activity Detection and Channel Estimation Error}
We focus on the regime with $L\geq K$ where the AMP-based algorithms were reported to have a stable behavior \cite{lliu2018}. In the $j$-th transmission block, when the AMP/correlated AMP algorithm converges, the performance of user activity detection can be characterized via the fixed point state $\tau_{\infty}^{j}$, as expressed in the following lemma.

\begin{lemma} \label{lemma3}
In the $j$-th transmission block, the probability of missed detection, defined as $P_{M,k}^{(j)}\triangleq P(\hat{u}_{k}^{(j)}=0|u_{k}^{(j)}=1)$, and the probability of false alarm, defined as $P_{F,n}^{(j)}\triangleq P(\hat{u}_{n}^{(j)}=1|u_{n}^{(j)}=0)$, can be expressed as follows:
\begin{align}
\label{md}
P^{(j)}_{M,k}\!=\!P^{(j)}_{M}\!\triangleq\!\left\{
\begin{aligned}
&\frac{1}{\Gamma(M)} \bar{\gamma}\left(M, \frac{M(\tau_{\infty}^{(1)})^{2}}{\beta}\ln\left(1+\frac{\beta}{(\tau_{\infty}^{(1)})^{2}}\right)\right), j=1,\\
&\frac{1}{\Gamma(M)} \bar{\gamma}\left(M, \frac{(\tau^{(j)}_{\infty})^{2}}{\beta}\left(M\ln\left(\!1\!+\!\frac{\beta}{(\tau^{(j)}_{\infty})^{2}}\right)\!+\!\ln\left(\frac{\epsilon_{1}^{(j)}\Phi_{2}^{(j-1)}}{\!\epsilon_{1}^{(j)}\!+\!\epsilon_{2}^{(j)}\!-\!\epsilon_{2}^{(j)}\Phi_{2}^{(j-1)}}\right)\right)\right), j>1, \notag
\end{aligned}
\right.
\\k\in\mathcal{K}^{(j)},
\end{align}
\begin{align}
\label{fa}
P^{(j)}_{F,n}\!=\! P^{(j)}_{F}\! \triangleq\! \left\{
\begin{aligned}
&1\!-\!\frac{1}{\Gamma(M)} \bar{\gamma}\left(M, M\left(1+ \frac{(\tau_{\infty}^{(1)})^{2}}{\beta}\right)\ln\left(1+\frac{\beta}{(\tau_{\infty}^{(1)})^{2}}\right)\right), j=1,\\
&1\!-\!\frac{1}{\Gamma(M)} \bar{\gamma}\left(\!M,\!\left(1\!+\!\frac{(\tau^{(j)}_{\infty})^{2}}{\beta}\right)\!\left(\!M\!\ln\left(\!1\!+\!\frac{\beta}{(\tau^{(j)}_{\infty})^{2}}\right)\!+\!\ln\!\left(\frac{\epsilon_{1}^{(j)}\Phi_{2}^{(j-1)}}{\epsilon_{1}^{(j)}\!+\!\epsilon_{2}^{(j)}\!-\!\epsilon_{2}^{(j)}\Phi_{2}^{(j-1)}}\right)\right)\right), j>1, \notag
\end{aligned}
\right.
\\n\in \mathcal{N}\setminus \mathcal{K}^{(j)},
\end{align}

\noindent where $\Gamma(\cdot)$ and $\bar{\gamma}(\cdot)$ denote the Gamma function and the upper incomplete Gamma function, respectively.
\end{lemma}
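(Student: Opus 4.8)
The plan is to compute the missed-detection and false-alarm probabilities directly from the thresholding rules in (\ref{activeamp}) and (\ref{activeampsi}), using the statistical characterization of the quantity $\mathbf{h}_{n,\infty}^{(j)}+(\mathbf{R}_{\infty}^{(j)})^{\mathrm{H}}\mathbf{p}_{n}$ that the state evolution (Lemma \ref{lemma1}/Lemma \ref{lemma2}) provides. Concretely, the state evolution says that upon convergence this quantity is distributed as $\mathbf{h}_{n}^{(j)} + \tau_{\infty}^{(j)}\mathbf{v}_{n}^{(j)}$ with $\mathbf{v}_{n}^{(j)}\sim\mathcal{CN}(\mathbf{0},\mathbf{I}_M)$ independent of $\mathbf{h}_{n}^{(j)}$. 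Hence conditioned on $u_{n}^{(j)}=1$ (so $\mathbf{h}_{n}^{(j)}=\sqrt{\beta}\boldsymbol{\alpha}_{n}^{(j)}$, $\boldsymbol{\alpha}_{n}^{(j)}\sim\mathcal{CN}(\mathbf{0},\mathbf{I}_M)$), the test statistic $\|\mathbf{h}_{n,\infty}^{(j)}+(\mathbf{R}_{\infty}^{(j)})^{\mathrm{H}}\mathbf{p}_{n}\|^2$ is a sum of $M$ i.i.d. exponential terms with mean $\beta+(\tau_{\infty}^{(j)})^2$, i.e. $(\beta+(\tau_{\infty}^{(j)})^2)$ times a $\mathrm{Gamma}(M,1)$ random variable (equivalently $\frac{\beta+(\tau_{\infty}^{(j)})^2}{2}\chi^2_{2M}$); conditioned on $u_{n}^{(j)}=0$ it is $(\tau_{\infty}^{(j)})^2$ times a $\mathrm{Gamma}(M,1)$ variable.

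First I would write $P_{M,k}^{(j)} = P\big(\mathrm{Gamma}(M,1) < l^{(j)}/(\beta+(\tau_{\infty}^{(j)})^2)\big)$ for $j=1$ and $P\big(\mathrm{Gamma}(M,1) < l_r^{(j)}/(\beta+(\tau_{\infty}^{(j)})^2)\big)$ for $j>1$, and similarly $P_{F,n}^{(j)} = P\big(\mathrm{Gamma}(M,1) \geq l^{(j)}/(\tau_{\infty}^{(j)})^2\big)$ (resp. with $l_r^{(j)}$) for the false-alarm case. These CDF/CCF expressions are exactly the (normalized) lower/upper incomplete Gamma functions $\frac{1}{\Gamma(M)}\bar\gamma(M,\cdot)$, which is why the claimed formulas have that shape; the $j=1$ case will come out immediately upon substituting the explicit threshold $l^{(j)}=M\ln(1+\beta/(\tau_{\infty}^{(1)})^2)/\big(\frac{1}{(\tau_{\infty}^{(1)})^2}-\frac{1}{(\tau_{\infty}^{(1)})^2+\beta}\big)$ and simplifying $\frac{1}{(\tau_{\infty}^{(1)})^2}-\frac{1}{(\tau_{\infty}^{(1)})^2+\beta}=\frac{\beta}{(\tau_{\infty}^{(1)})^2((\tau_{\infty}^{(1)})^2+\beta)}$, so that $l^{(1)}/(\beta+(\tau_{\infty}^{(1)})^2) = \frac{M(\tau_{\infty}^{(1)})^2}{\beta}\ln(1+\beta/(\tau_{\infty}^{(1)})^2)$ and $l^{(1)}/(\tau_{\infty}^{(1)})^2 = M(1+(\tau_{\infty}^{(1)})^2/\beta)\ln(1+\beta/(\tau_{\infty}^{(1)})^2)$, matching (\ref{md})–(\ref{fa}). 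For $j>1$ the identical algebra applied to $l_r^{(j)}$ produces the extra additive log-term $\ln\big(\frac{\epsilon_1^{(j)}\Phi_2^{(j-1)}}{\epsilon_1^{(j)}+\epsilon_2^{(j)}-\epsilon_2^{(j)}\Phi_2^{(j-1)}}\big)$ inside the argument, after rewriting $\frac{\frac{\epsilon_1}{\epsilon_1+\epsilon_2}\Phi_2}{1-\frac{\epsilon_2}{\epsilon_1+\epsilon_2}\Phi_2} = \frac{\epsilon_1\Phi_2}{\epsilon_1+\epsilon_2-\epsilon_2\Phi_2}$.

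The step I expect to be the main obstacle — and the one requiring care beyond routine algebra — is justifying that the test statistic has \emph{exactly} the $\mathrm{Gamma}(M,1)$-scaled law in the asymptotic regime, and in particular that $\Phi_2^{(j-1)}$ appearing in $l_r^{(j)}$ can be replaced by the deterministic constant with $\|\mathbf{b}_{n,\infty}^{(j-1)}\|^2$ substituted by its asymptotic mean $\beta\lambda^{(j-1)}+(\tau_{\infty}^{(j-1)})^2$. This requires invoking the state-evolution decoupling (each user's effective observation is an independent Gaussian-corrupted version of its channel), plus a concentration argument for $\|\mathbf{b}_{n,\infty}^{(j-1)}\|^2/M \to \beta\lambda^{(j-1)}+(\tau_{\infty}^{(j-1)})^2$ so that the data-dependent threshold $l_r^{(j)}$ concentrates on the stated deterministic value; I would cite the state-evolution results of \cite{lliu2018,annama2019} and Lemma \ref{lemma2} for the former and a law-of-large-numbers/typical-set argument for the latter. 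Given these, the remaining work is purely the substitution-and-simplification sketched above, together with noting that the probabilities are independent of the particular user index $k$ (resp. $n$) by exchangeability of the pilot ensemble and the i.i.d. channel model, which is what licenses writing $P_{M,k}^{(j)}=P_M^{(j)}$ and $P_{F,n}^{(j)}=P_F^{(j)}$.
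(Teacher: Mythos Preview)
Your proposal is correct and follows essentially the same approach as the paper's proof in Appendix~\ref{appendixc}: invoke the state-evolution equivalence $\mathbf{h}^{(j)}_{n,\infty}+(\mathbf{R}^{(j)}_{\infty})^{\mathrm{H}}\mathbf{p}_{n}\stackrel{d}{=}\mathbf{h}^{(j)}_{n}+\tau^{(j)}_{\infty}\mathbf{v}^{(j)}_{n}$, identify the conditional law of the squared norm as a scaled $\mathrm{Gamma}(M,1)$ (the paper phrases this equivalently as $\chi^{2}_{2M}$), and substitute the explicit thresholds $l^{(1)}$ and $l_{r}^{(j)}$ to obtain the incomplete-Gamma expressions. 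One minor simplification: the concentration step you flag for $\Phi_{2}^{(j-1)}$ is not actually needed here, because Theorem~\ref{theorem2} already defines $l_{r}^{(j)}$ with the deterministic $\Phi_{2}^{(j-1)}$ (via the substitution $\|\mathbf{b}_{n,\infty}^{(j-1)}\|^{2}\to\mathbb{E}[\|\tilde{\mathbf{X}}^{(j-1)}\|^{2}]$), so by the time you reach Lemma~\ref{lemma3} the threshold is already a constant and no further justification is required.
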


\begin{proof}
Please refer to Appendix \ref{appendixc}.
\end{proof}

Besides, the state evolution analysis of the AMP/correlated AMP algorithm facilitates the evaluation of the channel estimation performance in the asymptotic regime, which is formalized in the following lemma. 

\begin{lemma} \label{lemma4}
Assume the number of users $N$, the number of active users in the $j$-th transmission block $K^{(j)}$, the number of BS antennas $M$, and the length of the pilot sequences $L$ all approach inﬁnity. When the AMP/correlated AMP algorithm converges, the channel estimation error $\Delta \mathbf{h}_{k}^{(j)}$ of an active user in the $j$-th transmission block is zero mean and the covariance matrix is given as follows:
\vspace{-0.3cm}
\begin{align}
\operatorname{Cov}\left(\Delta \mathbf{h}_{k}^{(j)}, \Delta \mathbf{h}_{k}^{(j)}\right)=\Delta v_{k}^{(j)} \mathbf{I}, k \in \mathcal{K}^{(j)},
\label{channelerror}
\end{align}

\noindent where $\Delta v_{k}^{(j)}=\frac{\beta (\tau_{\infty}^{(j)})^{2}}{\beta+(\tau_{\infty}^{(j)})^{2}}$ for $k \in \mathcal{K}^{(j)}\cap \hat{\mathcal{K}}^{(j)}$ and $\Delta v_{k}^{(j)}=\beta$ for $k \in \mathcal{K}^{(j)} \setminus \hat{\mathcal{K}}^{(j)}$.
\end{lemma}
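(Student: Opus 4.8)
The plan is to reduce the analysis of $\Delta\mathbf{h}_{k}^{(j)}$ to a per-user Gaussian denoising problem via the asymptotic decoupling implied by the state evolution (Lemma \ref{lemma1} for $j=1$ and Lemma \ref{lemma2} for $j>1$). Upon convergence, the quantity $\mathbf{h}_{n,\infty}^{(j)}+(\mathbf{R}_{\infty}^{(j)})^{\mathrm{H}}\mathbf{p}_{n}$ that enters the denoiser is statistically equivalent to $\mathbf{a}_{n,\infty}^{(j)}=\mathbf{h}_{n}^{(j)}+\tau_{\infty}^{(j)}\mathbf{v}_{n}^{(j)}$ with $\mathbf{v}_{n}^{(j)}\sim\mathcal{CN}(\mathbf{0},\mathbf{I})$ independent of $\mathbf{h}_{n}^{(j)}$; hence $\hat{\mathbf{h}}_{n}^{(j)}=\eta_{\infty}^{(j)}(\mathbf{a}_{n,\infty}^{(j)})$ for the AMP receiver and $\hat{\mathbf{h}}_{n}^{(j)}=\eta_{\infty}^{(j)}(\mathbf{a}_{n,\infty}^{(j)},\mathbf{b}_{n,\infty}^{(j-1)})$ for the correlated AMP receiver. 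Since we condition on an \emph{active} user and the two stated values of $\Delta v_{k}^{(j)}$ correspond to the two possible detection outcomes of that user, I would split the argument into the cases $k\in\mathcal{K}^{(j)}\cap\hat{\mathcal{K}}^{(j)}$ and $k\in\mathcal{K}^{(j)}\setminus\hat{\mathcal{K}}^{(j)}$.

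For $k\in\mathcal{K}^{(j)}\cap\hat{\mathcal{K}}^{(j)}$, conditioning on $u_{k}^{(j)}=1$ gives $\mathbf{h}_{k}^{(j)}=\sqrt{\beta}\boldsymbol{\alpha}_{k}^{(j)}\sim\mathcal{CN}(\mathbf{0},\beta\mathbf{I})$, which is independent of all block-$(j-1)$ quantities because the small-scale fading is i.i.d.\ across transmission blocks; consequently the HI $\mathbf{b}_{n,\infty}^{(j-1)}$ carries no Gaussian-component information about $\mathbf{h}_{k}^{(j)}$, and the conditional MMSE estimate reduces to the same linear map as for plain AMP. Concretely, $\|\mathbf{a}_{k,\infty}^{(j)}\|^{2}/M\to\beta+(\tau_{\infty}^{(j)})^{2}$ by the law of large numbers; substituting this into the selection factor in the denominator of (\ref{denoiser}) (resp.\ the factor $\Phi_{n,\infty,1}^{(j)}\cdot\tfrac{\epsilon_{2}^{(j)}+\epsilon_{1}^{(j)}\Phi_{n,2}^{(j-1)}}{\epsilon_{3}^{(j)}\Phi_{n,2}^{(j-1)}}$ in (\ref{denoiserampsi})) and using $\ln(1+x)<x$ shows that this factor vanishes on the event $\{\hat{u}_{k}^{(j)}=1\}$ as $M\to\infty$, so $\hat{\mathbf{h}}_{k}^{(j)}\to\frac{\beta}{\beta+(\tau_{\infty}^{(j)})^{2}}\mathbf{a}_{k,\infty}^{(j)}$. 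Plugging $\mathbf{a}_{k,\infty}^{(j)}=\mathbf{h}_{k}^{(j)}+\tau_{\infty}^{(j)}\mathbf{v}_{k}^{(j)}$ yields $\Delta\mathbf{h}_{k}^{(j)}=\frac{(\tau_{\infty}^{(j)})^{2}}{\beta+(\tau_{\infty}^{(j)})^{2}}\mathbf{h}_{k}^{(j)}-\frac{\beta\tau_{\infty}^{(j)}}{\beta+(\tau_{\infty}^{(j)})^{2}}\mathbf{v}_{k}^{(j)}$, which has zero mean, and — by independence and isotropy of $\mathbf{h}_{k}^{(j)}$ and $\mathbf{v}_{k}^{(j)}$ — covariance $\frac{(\tau_{\infty}^{(j)})^{4}\beta+\beta^{2}(\tau_{\infty}^{(j)})^{2}}{(\beta+(\tau_{\infty}^{(j)})^{2})^{2}}\mathbf{I}=\frac{\beta(\tau_{\infty}^{(j)})^{2}}{\beta+(\tau_{\infty}^{(j)})^{2}}\mathbf{I}$, which is exactly $\Delta v_{k}^{(j)}\mathbf{I}$ and coincides with the Gaussian MMSE error covariance.

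For $k\in\mathcal{K}^{(j)}\setminus\hat{\mathcal{K}}^{(j)}$, the active user fails the threshold test in (\ref{activeamp})/(\ref{activeampsi}) and is therefore excluded from $\hat{\mathcal{K}}^{(j)}$ and from $\hat{\mathbf{H}}^{(j)}$ in the ZF equalizer; in the decomposition $\mathbf{h}_{n}^{(j)}=\hat{\mathbf{h}}_{n}^{(j)}+\Delta\mathbf{h}_{n}^{(j)}$ used for data detection its effective estimate is $\hat{\mathbf{h}}_{k}^{(j)}=\mathbf{0}$ (consistently, below the threshold the selection factor blows up and $\eta_{\infty}^{(j)}(\mathbf{a}_{k,\infty}^{(j)})\to\mathbf{0}$). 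Hence $\Delta\mathbf{h}_{k}^{(j)}=\mathbf{h}_{k}^{(j)}\sim\mathcal{CN}(\mathbf{0},\beta\mathbf{I})$, i.e., zero mean with covariance $\beta\mathbf{I}$, giving $\Delta v_{k}^{(j)}=\beta$.

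The main obstacle is rigorously justifying the degeneration of the soft MMSE denoiser into either the linear map or the zero map \emph{after} further conditioning on the detection outcome, since the event $\{\hat{u}_{k}^{(j)}=1\}$ is correlated with $\mathbf{a}_{k,\infty}^{(j)}$. This is handled by observing that, for the thresholds $l^{(j)}$ and $l_{r}^{(j)}$ in use, $\|\mathbf{a}_{k,\infty}^{(j)}\|^{2}/M$ concentrates strictly above the normalized threshold for active users and strictly below it for inactive users as $M\to\infty$ (equivalently, the missed-detection and false-alarm probabilities in Lemma \ref{lemma3} vanish), so the detection event has probability tending to one, the truncation it induces on $\mathbf{a}_{k,\infty}^{(j)}$ is asymptotically negligible, and the soft denoiser becomes hard on a probability-one event; the near-threshold regime has vanishing probability and can be discarded.
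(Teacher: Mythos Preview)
Your proposal is correct and follows essentially the same route the paper points to: the paper does not give a self-contained proof but simply refers to Appendices~D and~G of \cite{lliu2018} (with the MMSE denoiser replaced by $\eta_{t}^{(j)}(\cdot,\cdot)$), and your argument---state-evolution decoupling $\mathbf{a}_{k,\infty}^{(j)}=\mathbf{h}_{k}^{(j)}+\tau_{\infty}^{(j)}\mathbf{v}_{k}^{(j)}$, asymptotic hardening of the soft denoiser into either the linear Gaussian MMSE map or the zero map via $\ln(1+x)<x$, and the resulting explicit covariance computation---is precisely the content of those appendices, adapted to the correlated-AMP denoiser exactly as the paper indicates. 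Your additional discussion of the conditioning-on-detection subtlety is a nice clarification that the paper (and \cite{lliu2018}) leaves implicit.
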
  

\begin{proof}
This lemma can be proved by following similar lines in Appendices D and G of \cite{lliu2018}, with the main difference of replacing the MMSE denoiser $\eta_{t}^{(j)}(\cdot)$ in the AMP algorithm with $\eta_{t}^{(j)}(\cdot,\cdot)$ in the correlated AMP algorithm.
\end{proof}

In Lemma \ref{lemma4}, there is an underlying assumption that the number of active users in a transmission block approaches infinity. Although this is an idealized assumption that cannot hold in practice since the number of active users is likely to decrease over the transmission blocks, we show empirically in Section \ref{sectionvi} that (\ref{channelerror}) is accurate in practical system settings.

\subsection{Block Error Rate Analysis} \label{sectionvb}

There are two major sources that contribute to the data error of an active user in each transmission block: First, if an active user is not detected, no data detection and decoding is performed. Also, even though an active user is detected, data detection and decoding may still be erroneous due to channel and noise distortions. By taking into account their composite effects, the BLER of an active user in the $j$-th transmission block can be expressed as follows:
\begin{align}
    P_{e,k}^{(j)}&=P_{M,k}^{(j)}+\left(1-P_{M,k}^{(j)}\right)\bar{\varepsilon}_{k}^{(j)}, k\in\mathcal{K}^{(j)},
\label{bler}
\end{align}

\noindent where $\bar{\varepsilon}_{k}^{(j)}$ is the average BLER of an active user $k$ given it is detected. In the finite blocklength regime, error-free transmission is unachievable. Nevertheless, the relationship between the data error and channel coding rate can be tightly approximated via the following lemma.

\begin{lemma} \label{lemma5}
\cite{b17} In the finite blocklength regime, the maximum achievable rate of an additive white Gaussian noise (AWGN) channel can be tightly approximated via the following normal approximation: \vspace{-0.3cm}
\begin{align}
    R(d,\varepsilon)\approx C(\gamma)-\sqrt{\frac{V(\gamma)}{d}}Q^{-1}(\varepsilon),
\end{align}

\noindent where $d$ is the blocklength, $\varepsilon$ is the error probability, and $\gamma$ is the SNR. The term $C(\gamma)\triangleq\log_{2}{(1+\gamma)}$ is the channel capacity and $V(\gamma)\triangleq\frac{\gamma(\gamma+2)}{2(\gamma+1)^2}\log_{2}^2{e}$ is the channel dispersion. Besides, $Q^{-1}(x)$ denotes the inverse function of $Q(x)\triangleq \int_{x}^{\infty} \frac{1}{\sqrt{2 \pi}} e^{-t^{2} / 2} dt$. 
\end{lemma}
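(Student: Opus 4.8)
The plan is to establish the normal approximation by sandwiching the maximal coding rate between matching achievability and converse bounds, following the finite-blocklength framework of \cite{b17}. The starting point is the per-symbol information density of the complex AWGN channel $Y = X + Z$ with $Z \sim \mathcal{CN}(0,1)$ and input power $\gamma$, namely $i(x;y) \triangleq \log_{2} \frac{p_{Y|X}(y|x)}{p_{Y}(y)}$ evaluated with a capacity-achieving (Gaussian, or power-sphere) input. A direct second-moment computation gives $\mathbb{E}[i(X;Y)] = \log_{2}(1+\gamma) = C(\gamma)$ and $\operatorname{Var}[i(X;Y)] = \frac{\gamma(\gamma+2)}{2(\gamma+1)^{2}}\log_{2}^{2}e = V(\gamma)$, so that for blocklength $d$ the accumulated information density $\sum_{i=1}^{d} i(X_i;Y_i)$ is (essentially) a sum of i.i.d.\ terms with mean $dC(\gamma)$ and variance $dV(\gamma)$.

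Next I would invoke the achievability side: a random-coding bound (the dependence-testing or $\kappa\beta$ bound) guarantees a length-$d$ code with error at most $\varepsilon$ and $\log_{2}|\mathcal{C}| = dC(\gamma) - \sqrt{dV(\gamma)}\,Q^{-1}(\varepsilon) + O(\log d)$, where the $\sqrt{dV(\gamma)}\,Q^{-1}(\varepsilon)$ term arises by applying the Berry--Esseen theorem to the normalized sum $\big(\sum_{i=1}^{d} i(X_i;Y_i) - dC(\gamma)\big)/\sqrt{dV(\gamma)}$. For the converse I would apply the meta-converse (hypothesis-testing converse) with the same auxiliary output law, which upper-bounds $\log_{2}|\mathcal{C}|$ by $dC(\gamma) - \sqrt{dV(\gamma)}\,Q^{-1}(\varepsilon) + O(\log d)$ via Berry--Esseen on the same information-density sum. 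Dividing both bounds by $d$, writing $R(d,\varepsilon) = \log_{2}|\mathcal{C}|/d$, and absorbing the vanishing $O(d^{-1}\log d)$ correction yields $R(d,\varepsilon) \approx C(\gamma) - \sqrt{V(\gamma)/d}\,Q^{-1}(\varepsilon)$.

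The main obstacle is the power constraint: the capacity-achieving input cannot be taken i.i.d.\ Gaussian if every codeword must meet the per-codeword power limit, so on the power sphere the information density is only approximately an i.i.d.\ sum and one must carefully control the induced weak dependence --- in the achievability proof by working directly on the sphere, and in the converse by choosing a suitable non-product auxiliary output distribution. Managing these corrections is precisely what pins down the third-order $O(\log d)$ term and constitutes the technical core of \cite{b17}. Since only the normal approximation itself is needed for the subsequent BLER analysis, it suffices here to cite that reference for the full argument.
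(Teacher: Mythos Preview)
Your sketch of the Polyanskiy--Poor--Verd\'u argument is sound, but note that the paper does not actually prove Lemma~\ref{lemma5}: it is stated with the attribution \cite{b17} and no proof appears, as it is invoked purely as a known black-box result to feed into the subsequent BLER analysis. Your final sentence --- that it suffices to cite \cite{b17} --- is exactly what the paper does, so the detailed information-density and Berry--Esseen outline you give, while correct and faithful to the original reference, goes beyond what is present or needed here.
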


Accordingly, the BLER of an active user that is correctly detected in the $j$-th transmission block can be approximated as follows: \vspace{-0.3cm}
\begin{align}
   \varepsilon_{k}^{(j)}\left(\gamma_{k}^{(j)}\right) \approx Q\left(\frac{C(\gamma_{k}^{(j)})-R}{\sqrt{V(\gamma_{k}^{(j)})/d}}\right), k \in \mathcal{K}^{(j)}\cap \hat{\mathcal{K}}^{(j)},
\label{fblerror}
\end{align}
\noindent where $\gamma_{k}^{(j)}$ is the post-processing SNR. It can be derived based on (9) as follows:
\begin{align}
\begin{aligned}
    \gamma_{k}^{(j)}&=\frac{\mathbb{E}\left[|| (\hat{\mathbf{w}}_{k}^{(j)})^{\mathrm{H}} \hat{\mathbf{h}}_{k}^{(j)}(\mathbf{s}_{k}^{(j)})^{\mathrm{T}}||^2\right]}{\mathbb{E}\Big[||(\hat{\mathbf{w}}_{k}^{(j)})^{\mathrm{H}}\sum_{n \in \mathcal{K}^{(j)},n \neq k}  \hat{\mathbf{h}}_{n}^{(j)}(\mathbf{s}_{n}^{(j)})^{\mathrm{T}}||^2+||(\hat{\mathbf{w}}_{k}^{(j)})^{\mathrm{H}}\sum_{n \in \mathcal{K}^{(j)}}  \Delta \mathbf{h}_{n}^{(j)}(\mathbf{s}_{n}^{(j)})^{\mathrm{T}}||^2+||(\hat{\mathbf{w}}_{k}^{(j)})^{\mathrm{H}}\mathbf{N}^{(j)}||^2\Big]}\\
    &\overset{(\text{a})}{=}\frac{d}{0+||\hat{\mathbf{w}}_{k}^{(j)}||^2\left(\sum_{n \in \mathcal{K}^{(j)}} \mathbb{E}\left[||\Delta \mathbf{h}_{n}^{(j)}||^2\right]d+\mathbb{E}\left[||\mathbf{N}^{(j)}||^2\right]\right)}\\
    &\overset{(\text{b})}{=}\frac{1}{\left[\left((\hat{\mathbf{H}}^{(j)})^{\mathrm{H}}\hat{\mathbf{H}}^{(j)}\right)^{-1}\right]_{kk} \left(\sum_{n \in \mathcal{K}^{(j)}} \operatorname{Var}(\Delta \mathbf{h}_{n}^{(j)})+\sigma^2\right)},
\end{aligned}   
\end{align}

\noindent where (a) is because $\mathbb{E}\left[(\mathbf{s}_{n}^{j})^{\mathrm{T}}\mathbf{s}_{n}^{j}\right]=d$ and $(\hat{\mathbf{H}}^{(j)})^{\mathrm{H}}\hat{\mathbf{W}}^{(j)}=\mathbf{I}$, and (b) follows $\Delta \mathbf{h}_{n}^{(j)} \sim \mathcal{CN}(\mathbf{0},\Delta v_{n}^{(j)}\mathbf{I})$, $ n\in\mathcal{K}^{(j)}$. 

To obtain the average BLER, it is critical to determine the distribution of the post-processing SNR, which is conditioned on the outcome of user activity detection. In particular, the dimension of matrix $(\hat{\mathbf{H}}^{(j)})^{\mathrm{H}}\hat{\mathbf{H}}^{(j)}$ depends on the number of users that are estimated as active, which is influential on the distribution of $[((\hat{\mathbf{H}}^{(j)})^{\mathrm{H}}\hat{\mathbf{H}}^{(j)})^{-1}]_{kk}$. In addition, the term $\sum_{n \in \mathcal{K}^{(j)}}\operatorname{Var}(\Delta \mathbf{h}_{n}^{(j)})$ denotes the cumulative channel estimation error of all the active users, which depends on whether they are detected correctly. In the following lemma, we elaborate the distribution of the post-processing SNR conditioned on the number of missed detection and false alarm users.

\begin{lemma} \label{lemma6}
Suppose there are $e^{(j)}$ ($e^{(j)}=0,1,\cdots,K^{(j)}$) missed detection and $f^{(j)}$ ($f^{(j)}=0,1,\cdots,N-K^{(j)}$) false alarm users in the $j$-th transmission block. If $M\geq|\hat{\mathcal{K}}^{(j)}|= K^{(j)}-e^{(j)}+f^{(j)}$, the post-processing SNR of a correctly detected active user, i.e., $k \in \mathcal{K}^{(j)}\cap \hat{\mathcal{K}}^{(j)}$, can be expressed as follows:
\begin{align}
    \gamma_{k|e^{(j)}f^{(j)}}^{(j)}=\frac{1}{\left[\left((\hat{\mathbf{H}}^{(j)})^{\mathrm{H}}\hat{\mathbf{H}}^{(j)}\right)^{-1}\right]_{kk} \left((K^{(j)}-e^{(j)})\frac{\beta(\tau_{\infty}^{(j)})^2}{\beta+(\tau_{\infty}^{(j)})^2}+e^{(j)} \beta+\sigma^2\right)},
\end{align}

\noindent which is a Chi-square random variable with the probability density function (PDF) given as $\Psi_{\gamma_{k|e^{(j)}f^{(j)}}^{(j)}}(x)=\frac{x^{\theta_{2}^{(j)}-1}e^{-x/(2\theta_{1}^{(j)})}}{2^{\theta_{2}^{(j)}}\Gamma(\theta_{2}^{(j)})}$ $\left(x\geq 0\right)$, and $\theta_{1}^{(j)}=\frac{\beta^2}{(K^{(j)}-e^{(j)})\beta(\tau_{\infty}^{(j)})^2+(e^{(j)} \beta+\sigma^2)(\beta+(\tau_{\infty}^{(j)})^2)}$, $\theta_{2}^{(j)}=M-|\hat{\mathcal{K}}^{(j)}|+1$. 
\end{lemma}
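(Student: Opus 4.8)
The plan is to prove Lemma~6 in two halves: first obtain the displayed formula for $\gamma^{(j)}_{k|e^{(j)}f^{(j)}}$ by specializing the closed-form post-processing SNR expression established just above (the one ending with $1/([((\hat{\mathbf{H}}^{(j)})^{\mathrm{H}}\hat{\mathbf{H}}^{(j)})^{-1}]_{kk}(\sum_{n\in\mathcal{K}^{(j)}}\operatorname{Var}(\Delta\mathbf{h}_n^{(j)})+\sigma^2))$) to the event of exactly $e^{(j)}$ missed detections and $f^{(j)}$ false alarms; and second, identify the distribution of that expression by invoking the classical law of the diagonal entries of an inverse complex Wishart matrix. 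Throughout I work in the asymptotic regime of Lemmas~\ref{lemma4}, so that state-evolution statistics apply.

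First I would do the substitution. Among the $K^{(j)}$ active users, $e^{(j)}$ are missed and $K^{(j)}-e^{(j)}$ are correctly detected, so by Lemma~\ref{lemma4} the corresponding channel-estimation-error variances are $\operatorname{Var}(\Delta\mathbf{h}_n^{(j)})=\frac{\beta(\tau_\infty^{(j)})^2}{\beta+(\tau_\infty^{(j)})^2}$ and $\operatorname{Var}(\Delta\mathbf{h}_n^{(j)})=\beta$ respectively. Hence $\sum_{n\in\mathcal{K}^{(j)}}\operatorname{Var}(\Delta\mathbf{h}_n^{(j)})+\sigma^2=(K^{(j)}-e^{(j)})\frac{\beta(\tau_\infty^{(j)})^2}{\beta+(\tau_\infty^{(j)})^2}+e^{(j)}\beta+\sigma^2$, and plugging this into the SNR expression gives the claimed formula for $\gamma^{(j)}_{k|e^{(j)}f^{(j)}}$, whose only remaining random object is $[((\hat{\mathbf{H}}^{(j)})^{\mathrm{H}}\hat{\mathbf{H}}^{(j)})^{-1}]_{kk}$.

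Next I would pin down the law of that Gram-matrix entry. By the orthogonality principle of MMSE estimation together with the state evolution — the true channel equals the estimate plus an uncorrelated Gaussian error of variance $\frac{\beta(\tau_\infty^{(j)})^2}{\beta+(\tau_\infty^{(j)})^2}$ — an active user's estimate obeys $\hat{\mathbf{h}}_n^{(j)}\sim\mathcal{CN}\bigl(\mathbf{0},\frac{\beta^2}{\beta+(\tau_\infty^{(j)})^2}\mathbf{I}_M\bigr)$, and I would argue that asymptotically the $|\hat{\mathcal{K}}^{(j)}|=K^{(j)}-e^{(j)}+f^{(j)}$ columns of $\hat{\mathbf{H}}^{(j)}$ are i.i.d.\ with this law. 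Writing $\hat{\mathbf{H}}^{(j)}=\sqrt{\beta^2/(\beta+(\tau_\infty^{(j)})^2)}\,\tilde{\mathbf{H}}$ with $\tilde{\mathbf{H}}$ having i.i.d.\ $\mathcal{CN}(0,1)$ entries and $M\geq|\hat{\mathcal{K}}^{(j)}|$, the standard identity $1/[(\tilde{\mathbf{H}}^{\mathrm{H}}\tilde{\mathbf{H}})^{-1}]_{kk}=\tilde{\mathbf{h}}_k^{\mathrm{H}}\mathbf{P}_k^{\perp}\tilde{\mathbf{h}}_k$ — with $\mathbf{P}_k^{\perp}$ the projector onto the $(M-|\hat{\mathcal{K}}^{(j)}|+1)$-dimensional orthogonal complement of the span of the other columns, independent of $\tilde{\mathbf{h}}_k$ — shows this reciprocal is a chi-square variable with $2(M-|\hat{\mathcal{K}}^{(j)}|+1)$ degrees of freedom. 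Therefore $\gamma^{(j)}_{k|e^{(j)}f^{(j)}}$ equals $\frac{\beta^2/(\beta+(\tau_\infty^{(j)})^2)}{(K^{(j)}-e^{(j)})\frac{\beta(\tau_\infty^{(j)})^2}{\beta+(\tau_\infty^{(j)})^2}+e^{(j)}\beta+\sigma^2}$ times that chi-square variable; clearing the common factor $\beta+(\tau_\infty^{(j)})^2$ collapses the prefactor to $\theta_1^{(j)}$, the degrees-of-freedom parameter equals $2\theta_2^{(j)}$ with $\theta_2^{(j)}=M-|\hat{\mathcal{K}}^{(j)}|+1$, and a change of variables yields the stated PDF $\Psi_{\gamma^{(j)}_{k|e^{(j)}f^{(j)}}}(x)$.

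I expect the main obstacle to be the justification used in the third step that \emph{every} column of $\hat{\mathbf{H}}^{(j)}$ may be treated as i.i.d.\ $\mathcal{CN}\bigl(\mathbf{0},\frac{\beta^2}{\beta+(\tau_\infty^{(j)})^2}\mathbf{I}\bigr)$, so that $(\hat{\mathbf{H}}^{(j)})^{\mathrm{H}}\hat{\mathbf{H}}^{(j)}$ is exactly a scaled complex Wishart matrix and the diagonal-of-inverse fact applies verbatim. For the correctly detected active users this is immediate from the state evolution and MMSE orthogonality, but the $f^{(j)}$ false-alarm columns have identically zero underlying channels, and the estimates enter the Gram matrix conditioned on the norm-based detection events; handling both requires a large-system argument that these effects are asymptotically negligible (the estimate norm concentrates well above the threshold for large $M$, and the detection event is, to leading order, independent of $[((\hat{\mathbf{H}}^{(j)})^{\mathrm{H}}\hat{\mathbf{H}}^{(j)})^{-1}]_{kk}$). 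Granted these points, the rest is the substitution of Lemma~\ref{lemma4}, the textbook Wishart computation, and an elementary algebraic simplification.
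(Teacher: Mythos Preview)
Your proposal is correct and follows essentially the same route as the paper: substitute Lemma~\ref{lemma4} into the SNR expression, then use the identity $1/[((\hat{\mathbf{H}}^{(j)})^{\mathrm{H}}\hat{\mathbf{H}}^{(j)})^{-1}]_{kk}=(\hat{\mathbf{h}}_k^{(j)})^{\mathrm{H}}\mathbf{P}_k^{\perp}\hat{\mathbf{h}}_k^{(j)}$ together with $\hat{\mathbf{h}}_k^{(j)}\sim\mathcal{CN}(\mathbf{0},\frac{\beta^2}{\beta+(\tau_\infty^{(j)})^2}\mathbf{I})$ to read off the chi-square law. One small refinement: the paper does not assume \emph{all} columns of $\hat{\mathbf{H}}^{(j)}$ are i.i.d.\ with this variance; it only uses that the specific column $\hat{\mathbf{h}}_k^{(j)}$ (for $k\in\mathcal{K}^{(j)}\cap\hat{\mathcal{K}}^{(j)}$) has this law and is independent of the projector, whose eigenvalue structure ($|\hat{\mathcal{K}}^{(j)}|-1$ zeros, $M-|\hat{\mathcal{K}}^{(j)}|+1$ ones) holds regardless of the distribution of the remaining columns---this largely sidesteps the false-alarm-column concern you raised, though the conditioning-on-detection issue is glossed over in the paper as well.
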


\begin{proof}
Since the columns in $\hat{\mathbf{H}}^{(j)}$ are linearly independent, according to the random matrix theory \cite{b18}, $1/[((\hat{\mathbf{H}}^{(j)})^{\mathrm{H}}\hat{\mathbf{H}}^{(j)})^{-1}]_{kk}$ can be rewritten as $(\hat{\mathbf{h}}_{k}^{(j)})^{\mathrm{H}}\tilde{\mathbf{M}}^{(j)}\hat{\mathbf{h}}_{k}^{(j)}$, where $\tilde{\mathbf{M}}^{(j)}$ is a non-negative Hermitian matrix with $|\hat{\mathcal{K}}^{(j)}|-1$ eigenvalues equal zero and $M-|\hat{\mathcal{K}}^{(j)}|+1$ eigenvalues equal 1. Since $\hat{\mathbf{h}}_{k}^{(j)} \sim \mathcal{CN}(0,\frac{\beta^2}{\beta+(\tau_{\infty}^{(j)})^2}\mathbf{I})$, $\forall k \in \mathcal{K}^{(j)}\cap \hat{\mathcal{K}}^{(j)}$, $1/[((\hat{\mathbf{H}}^{(j)})^{\mathrm{H}}\hat{\mathbf{H}}^{(j)})^{-1}]_{kk}$ is a random variable with the PDF given as follows:
\begin{align}
g(x)= \frac{x^{\theta_{2}^{(j)}-1}e^{-x/2(\beta^2/(\beta+(\tau_{\infty}^{(j)})^2))}}{2^{\theta_{2}^{(j)}}\Gamma(\theta_{2}^{(j)})}, x\geq 0.
\end{align} 

\noindent Since $\gamma_{k|e^{(j)}f^{(j)}}^{(j)}$ is a scaled version of $\frac{1}{\left[\left((\hat{\mathbf{H}}^{(j)})^{\mathrm{H}}\hat{\mathbf{H}}^{(j)}\right)^{-1}\right]_{kk}}$, it is also Chi-square distributed, and its PDF can thus be obtained accordingly.
\end{proof}

With the conditional post-processing SNR distribution derived in Lemma \ref{lemma6}, the average BLER of an active user that is successfully detected in the $j$-th transmission block, is obtained in the following theorem.

\begin{theorem} \label{theorem3}
    The average BLER of an active user $k$ that is correctly detected in the $j$-th transmission block, i.e., $k\in\mathcal{K}^{(j)}\cap \hat{\mathcal{K}}^{(j)}$, is given as follows:
\begin{align}
\bar{\varepsilon}_{k}^{(j)}\!=\!\sum\limits_{f^{(j)}\!=0}^{N\!-\!K^{(j)}}\!{N\!-\!K^{(j)}\choose f^{(j)}} (P_{F}^{(j)})^{f^{(j)}}\!(1\!-\!P_{F}^{(j)})^{N\!-\!K^{(j)}\!-\!f^{(j)}}\!\sum\limits_{e^{(j)}=0}^{K^{(j)}} {K^{(j)}\choose e^{(j)}} (P_{M}^{(j)})^{e^{(j)}}(1\!-\!P_{M}^{(j)})^{K^{(j)}\!-\!e^{(j)}}\bar{\varepsilon}_{k|e^{(j)}f^{(j)}}^{(j)},
\end{align}

\noindent where 
\begin{align}
    \bar{\varepsilon}_{k|e^{(j)}f^{(j)}}^{(j)}\approx\left\{
    \begin{aligned}
    &\mathbb{E}_{\gamma_{k|e^{(j)}f^{(j)}}}\left[\varepsilon_{k}^{(j)}\left(\gamma_{k|e^{(j)}f^{(j)}}^{(j)}\right)\right], M\geq|\hat{\mathcal{K}}^{(j)}|,\\
    & 1,M<|\hat{\mathcal{K}}^{(j)}|.
    \end{aligned}
    \right.
\label{experror}
\end{align}
\end{theorem}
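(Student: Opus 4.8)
The statement to establish is Theorem~\ref{theorem3}, namely the expression for $\bar{\varepsilon}_{k}^{(j)}$, the average BLER of an active user that is correctly detected in the $j$-th transmission block. The plan is to decompose the BLER by conditioning on the random outcome of user activity detection --- specifically, on the number of missed-detection users $e^{(j)}$ among the $K^{(j)}$ active users and the number of false-alarm users $f^{(j)}$ among the $N-K^{(j)}$ inactive users --- and then apply the total expectation / total probability law. First I would invoke Lemma~\ref{lemma3}, which gives the per-user missed-detection probability $P_{M}^{(j)}$ and false-alarm probability $P_{F}^{(j)}$ that are identical across users (they depend only on the fixed-point state $\tau_{\infty}^{(j)}$ and the known parameters). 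Since the pilot sequences are drawn i.i.d. and, in the asymptotic regime, the AMP/correlated AMP estimates decouple, the detection events across distinct users can be treated as independent Bernoulli trials. Hence the number of false alarms $f^{(j)}$ is $\mathrm{Binomial}(N-K^{(j)},P_{F}^{(j)})$ and the number of missed detections $e^{(j)}$ is $\mathrm{Binomial}(K^{(j)},P_{M}^{(j)})$, and these two counts are independent because one concerns inactive users and the other active users.

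Next I would write
\begin{align}
\bar{\varepsilon}_{k}^{(j)} = \mathbb{E}_{e^{(j)},f^{(j)}}\!\left[\bar{\varepsilon}_{k|e^{(j)}f^{(j)}}^{(j)}\right] = \sum_{f^{(j)}=0}^{N-K^{(j)}}\sum_{e^{(j)}=0}^{K^{(j)}} \Pr\!\big(f^{(j)}\big)\Pr\!\big(e^{(j)}\big)\,\bar{\varepsilon}_{k|e^{(j)}f^{(j)}}^{(j)},
\end{align}
and substitute the two binomial PMFs to obtain the double sum with the two binomial coefficients and the corresponding powers of $P_{F}^{(j)}$, $1-P_{F}^{(j)}$, $P_{M}^{(j)}$, $1-P_{M}^{(j)}$ exactly as in the theorem statement. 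It then remains to justify the conditional term $\bar{\varepsilon}_{k|e^{(j)}f^{(j)}}^{(j)}$ in~(\ref{experror}). For the case $M<|\hat{\mathcal{K}}^{(j)}| = K^{(j)}-e^{(j)}+f^{(j)}$, the ZF equalizer in~(8) cannot invert $(\hat{\mathbf{H}}^{(j)})^{\mathrm{H}}\hat{\mathbf{H}}^{(j)}$ (it is rank-deficient), so inter-user interference cannot be suppressed and the decoding is treated as failing with probability one --- hence $\bar{\varepsilon}_{k|e^{(j)}f^{(j)}}^{(j)}=1$. For the complementary case $M\geq|\hat{\mathcal{K}}^{(j)}|$, I would plug the finite-blocklength normal approximation~(\ref{fblerror}) from Lemma~\ref{lemma5} for $\varepsilon_{k}^{(j)}(\gamma_{k}^{(j)})$ and average over the conditional post-processing SNR distribution supplied by Lemma~\ref{lemma6}, i.e. integrate $\varepsilon_{k}^{(j)}(x)$ against the Chi-square PDF $\Psi_{\gamma_{k|e^{(j)}f^{(j)}}^{(j)}}(x)$; here I would note that conditioned on $e^{(j)}$ and $f^{(j)}$ the post-processing SNR in Lemma~\ref{lemma6} is exactly characterized (the number of correctly detected active users is $K^{(j)}-e^{(j)}$, which fixes both the equalizer dimension through $\theta_2^{(j)}$ and the aggregate channel-estimation-error-plus-noise term through $\theta_1^{(j)}$ via Lemma~\ref{lemma4}), so the conditional average BLER is well defined as $\mathbb{E}_{\gamma_{k|e^{(j)}f^{(j)}}}[\varepsilon_{k}^{(j)}(\gamma_{k|e^{(j)}f^{(j)}}^{(j)})]$.

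The main obstacle I anticipate is not any single computation but the independence/decoupling argument: rigorously, the ZF-equalizer performance for user $k$ depends on the \emph{identities} of the misdetected and false-alarmed users (through the realization of $\hat{\mathbf{H}}^{(j)}$), not merely their counts, so one must argue that, by the exchangeability induced by the i.i.d. Gaussian pilots and the statistical-channel-inversion power control, the conditional BLER depends on $(e^{(j)},f^{(j)})$ only and not on which users these are --- this is what lets Lemma~\ref{lemma6} give a count-indexed SNR distribution and makes the clean double-sum valid. I would handle this by appealing to the permutation invariance of the model (all users have the same effective large-scale gain $\beta$ and i.i.d. pilots/channels), so that averaging over user identities collapses the dependence to the counts; the remaining steps --- substituting the binomial PMFs and invoking Lemmas~\ref{lemma5} and~\ref{lemma6} --- are then bookkeeping.
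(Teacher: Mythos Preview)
Your proposal is correct and follows essentially the same approach as the paper: condition on the counts $(e^{(j)},f^{(j)})$, use the binomial PMFs from the per-user detection probabilities of Lemma~\ref{lemma3} together with their independence, and then invoke Lemmas~\ref{lemma5} and~\ref{lemma6} for the conditional BLER in the two regimes of $|\hat{\mathcal{K}}^{(j)}|$. Your discussion of the exchangeability/decoupling point is even a bit more careful than the paper, which simply appeals to users being ``statistically the same'' under stochastic channel-inversion power control.
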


\begin{proof} For the case with $M\geq|\hat{\mathcal{K}}^{(j)}|$, conditioned on an outcome of user activity detection with $e^{(j)}$ missed detection and $f^{(j)}$ false alarm users, the average BLER of an active user that is detected correctly can be derived by taking expectation on both sides of (\ref{fblerror}) with respect to $\gamma_{k|e^{(j)}f^{(j)}}^{(j)}$. On the other hand, for $M<|\hat{\mathcal{K}}^{(j)}|$, $\bar{\varepsilon}_{k|e^{(j)}f^{(j)}}^{(j)}$ is approximated as 1 due to the assumption on the ZF equalizer that no data symbol can be correctly detected in this case. 

Then, the probability that $e^{(j)}$ users are miss-detected can be calculated as ${K^{(j)}\choose e^{(j)}} (P_{M}^{(j)})^{e^{(j)}}$ $(1-P_{M}^{(j)})^{K^{(j)}-e^{(j)}}$ since different active users can be deemed as statistically the same with the stochastic channel inversion power control. Similarly, the probability that there are $f^{(j)}$ false alarm users is given as ${N-K^{(j)}\choose f^{(j)}} (P_{F}^{(j)})^{f^{(j)}} (1-P_{F}^{(j)})^{N-K^{j}-f^{j}}$. Besides, the events that there are $e^{(j)}$ missed detection users and $f^{(j)}$ false alarm users are independent. By further incorporating the probabilities of different user activity detection outcomes, the average BLER of an active user that is correctly detected can be obtained.
\end{proof}

Due to the statistical channel inversion power control, we have $\bar{\varepsilon}^{(j)} = \bar{\varepsilon}_{k}^{(j)}, k\in \mathcal{K}^{(j)}$. By substituting $\bar{\varepsilon}^{(j)}$, $P_{M}^{(j)}$, and $P_{F}^{(j)}$ into (\ref{bler}), the average BLER of an active user in the $j$-th transmission block is obtained, i.e., the BLER performance of different active users is identical, which can be denoted as $P^{(j)}_{e}$ by dropping the user index. Recall that $P^{\left(j\right)}_{e}\left( j \geq 2 \right)$ depends on both $K^{\left(j-1\right)}$ and $K^{\left(j\right)}$, we also denote it as $P^{\left(j\right)}_{e|K^{\left(j-1\right)}K^{\left(j\right)}}$ for convenience. For the initial transmission, we denote $P^{\left(1\right)}_{e}$ as $P^{\left(1\right)}_{e|K^{\left(1\right)}}$.

\subsection{Frame Error Rate of Grant-free Massive RA Systems with Retransmission}
Based on the BLER analysis in Section \ref{sectionvb}, the number of active users in the next transmission block follows a binomial distribution that can be easily characterized. Hence, the FER of grant-free massive RA systems with retransmission can be derived in a recursive form, as shown in the following theorem.
\begin{theorem} \label{theorem4}
    Define
\begin{align}
f_{\mathbb{K}^{(j+1)}}^{(J)}\triangleq \left\{
\begin{aligned}
&{{K^{(1)}-1}\choose {K^{(2)}-1}}(P_{e|K^{(1)}}^{(1)})^{K^{(2)}}(1-P_{e|K^{(1)}}^{(1)})^{K^{(1)}-K^{(2)}}, j=1,\\
&{{K^{(j)}-1}\choose {K^{(j+1)}-1}}(P_{e|K^{(j-1)}K^{(j)}}^{(j)})^{K^{(j+1)}}(1-P_{e|K^{(j-1)}K^{(j)}}^{(j)})^{K^{(j)}-K^{(j+1)}}, j= 2,\cdots, J-1,
\end{aligned}
\right.
\end{align}

\noindent where $\mathbb{K}^{(j)}\triangleq [K^{(1)},\cdots,K^{(j)}]$ denotes the numbers of active users in the first $j$ transmission blocks. For a given $\mathbb{K}^{(j)}$, the probability that the data of an active user in $\mathcal{K}^{\left(j\right)}$ cannot be decoded successfully from the $j$-th to the last transmission block, denoted as $\Phi_{\mathbb{K}^{(j)}}^{(J)}$, can be expressed in a recursive form as follows:
\begin{align}
\label{recursive}
\Phi_{\mathbb{K}^{(j)}}^{(J)}\triangleq \left\{
\begin{aligned}
&\sum\limits_{K^{(j+1)}=1}^{K^{(j)}} f_{\mathbb{K}^{(j+1)}}^{(J)}\Phi_{\mathbb{K}^{(j+1)}}^{(J)}, j=1,\cdots,J-1,\\
&P_{e|K^{(J-1)}K^{(J)}}^{(J)}, j= J, 
\end{aligned}
\right.
\end{align}

\noindent i.e., $\Phi_{\mathbb{K}^{(1)}}^{(J)}$ gives the FER of the grant-free massive RA system with $J$ transmission blocks.
\end{theorem}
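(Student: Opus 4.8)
The plan is to establish the recursion in \eqref{recursive} by induction on the transmission block index, working backwards from $j=J$ to $j=1$, and interpreting each $\Phi_{\mathbb{K}^{(j)}}^{(J)}$ as a conditional frame-error probability. First I would fix a tagged active user $k\in\mathcal{K}^{(1)}$ (all active users are statistically identical under the statistical channel inversion power control, so this loses no generality) and define the event $\mathcal{F}_{j}$ that $k$'s packet remains undecoded after blocks $j,j+1,\dots,J$, conditioned on the trajectory $\mathbb{K}^{(j)}=[K^{(1)},\dots,K^{(j)}]$ of active-user counts observed so far. The base case $j=J$ is immediate: by the definition of $P^{(J)}_{e|K^{(J-1)}K^{(J)}}$ as the per-user BLER in the last block (from \eqref{bler} together with Theorem~\ref{theorem3}), the packet fails in the frame exactly when it fails in block $J$, so $\Phi_{\mathbb{K}^{(J)}}^{(J)}=P^{(J)}_{e|K^{(J-1)}K^{(J)}}$.

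For the inductive step, condition on the outcome of block $j$ ($1\le j\le J-1$). Given $\mathbb{K}^{(j)}$, each of the $K^{(j)}$ active users independently fails to decode in block $j$ with probability $P^{(j)}_{e|K^{(j-1)}K^{(j)}}$ (with the convention that for $j=1$ this is $P^{(1)}_{e|K^{(1)}}$), because under the power-control assumption the active users are exchangeable and Theorem~\ref{theorem3} shows the conditional BLER is common to all of them. Hence the number of users carried into block $j+1$ is $K^{(j+1)}=1+\mathrm{Binomial}\bigl(K^{(j)}-1,\,P^{(j)}_{e|K^{(j-1)}K^{(j)}}\bigr)$ when we further condition on the tagged user $k$ itself having failed in block $j$ — the ``$-1$'' and the leading ``$1$'' account for the fact that $k$ is known to be among the survivors. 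This is precisely $f_{\mathbb{K}^{(j+1)}}^{(J)}$: the probability that the survivor count is $K^{(j+1)}$ and that $k$ is one of them. Then by the law of total probability over $K^{(j+1)}\in\{1,\dots,K^{(j)}\}$ and the Markov-like property that, given $K^{(j-1)},K^{(j)},K^{(j+1)}$, the future evolution depends on the past only through these counts (which holds because the HI design makes block $j+1$'s receiver performance a function of $K^{(j)}$ and $K^{(j+1)}$ only), we obtain
\begin{align}
\Phi_{\mathbb{K}^{(j)}}^{(J)}=\sum_{K^{(j+1)}=1}^{K^{(j)}} f_{\mathbb{K}^{(j+1)}}^{(J)}\,\Phi_{\mathbb{K}^{(j+1)}}^{(J)},
\end{align}
which is the claimed recursion; unrolling it from $j=1$ yields $\Phi_{\mathbb{K}^{(1)}}^{(1)}$, the unconditional FER.

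The main obstacle I anticipate is justifying the two independence/exchangeability claims rigorously: (i) that conditioned on $\mathbb{K}^{(j)}$ the decoding failures of the $K^{(j)}$ active users in block $j$ are mutually independent with common probability $P^{(j)}_{e|K^{(j-1)}K^{(j)}}$, and (ii) that the conditional FER for blocks $j{+}1,\dots,J$ depends on the history only through $(K^{(j)},K^{(j+1)})$ (and, at the start of block $j{+}1$, also $K^{(j)}$ playing the role of ``$K^{(j'-1)}$''). Claim (i) follows from the fact that the per-user post-processing SNRs $\gamma_{k}^{(j)}$ in Lemma~\ref{lemma6} are i.i.d.\ across the correctly-detected active users (each $1/[((\hat{\mathbf H}^{(j)})^{\mathrm H}\hat{\mathbf H}^{(j)})^{-1}]_{kk}$ being an independent Chi-square by the RMT argument, and the missed-detection events being i.i.d.\ Bernoulli by Lemma~\ref{lemma3}), together with independent codeword noise; strictly this requires the asymptotic decoupling underlying the AMP state evolution, which I would invoke as inherited from Lemmas~\ref{lemma3}--\ref{lemma6}. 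Claim (ii) is built into the receiver design: by construction the HI $\tilde{\mathbf h}^{(j)}_{n}$ entering block $j{+}1$ is statistically characterized (Lemma~\ref{lemma2}) solely through $\lambda^{(j)}=K^{(j)}/N$ and $\lambda^{(j+1)}=K^{(j+1)}/N$, so once these counts are fixed the block-$(j{+}1)$ error probability carries no further dependence on $K^{(1)},\dots,K^{(j-1)}$. With these two facts in hand the recursion is a routine conditioning argument, and the binomial coefficients ${K^{(j)}-1\choose K^{(j+1)}-1}$ are exactly the combinatorial factor for choosing which of the other $K^{(j)}-1$ active users survive alongside the tagged user $k$.
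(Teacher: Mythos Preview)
Your proposal is correct and follows essentially the same route as the paper: backward induction on $j$ with base case $j=J$, a tagged active user (the paper calls it ``User~A''), and the binomial count $\binom{K^{(j)}-1}{K^{(j+1)}-1}$ for how many of the remaining $K^{(j)}-1$ users fail alongside the tagged one. The paper's proof is terser and simply invokes statistical equivalence of users under channel-inversion power control without the explicit discussion you give of the independence and Markov-type claims (i) and (ii); your added justification via Lemmas~\ref{lemma3}--\ref{lemma6} and the HI design is a useful elaboration, though note that the diagonal entries of $((\hat{\mathbf H}^{(j)})^{\mathrm H}\hat{\mathbf H}^{(j)})^{-1}$ are not literally independent, so claim~(i) ultimately rests on the same asymptotic/exchangeability heuristic the paper uses implicitly.
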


\begin{proof}
Since all users are statistically the same with stochastic channel inversion power control, the FER of the grant-free massive RA system equals to the probability that the data of an arbitrary active user in $\mathcal{K}$ cannot be decoded successfully in all the $J$ transmission blocks. The proof of (\ref{recursive}) can be obtained via mathematical induction. First, $\Phi_{\mathbb{K}^{(J)}}^{(J)}=P_{e|K^{(J-1)}K^{(J)}}^{(J)}$ holds by definition. Then, suppose \vspace{-0.5cm}
\begin{align}
\Phi_{\mathbb{K}^{(j)}}^{(J)}= \sum\limits_{K^{(j+1)}=1}^{K^{(j)}} f_{\mathbb{K}^{(j+1)}}^{(J)}\Phi_{\mathbb{K}^{(j+1)}}^{(J)}.
\end{align}

\noindent To obtain $\Phi_{\mathbb{K}^{(j-1)}}^{(J)}$, it is equivalent to derive the probability that the data of an arbitrary active user in $\mathcal{K}^{\left(j-1\right)}$ (e.g. User A) cannot be decoded successfully from the $(j-1)$-th to the last transmission block given $\mathbb{K}^{(j)}$, which is the weighted average of the conditional probabilities for a given value of $K^{(j)}$. In other words, \vspace{-0.5cm}
\begin{align}
\Phi_{\mathbb{K}^{(j-1)}}^{(J)}= \sum\limits_{K^{(j)}=1}^{K^{(j-1)}} P_{K^{(j)}}\Phi_{\mathbb{K}^{(j)}}^{(J)},
\end{align}

\noindent where $P_{K^{(j)}}$ denotes the probability with $K^{(j)}$ active users (including User A) in the $j$-th transmission block. This is the probability that the data of User A is not decoded successfully in the $(j-1)$-th transmission block, meanwhile $K^{(j)}-1$ out of the remaining $K^{(j-1)}-1$ active users in the $(j-1)$-th transmission block are also in transmission error, i.e.,
\begin{align}
P_{K^{(j)}}=P_{e|K^{(j-2)}K^{(j-1)}}^{(j-1)} {{K^{(j-1)}-1}\choose {K^{(j)}-1}}(P_{e|K^{(j-2)}K^{(j-1)}}^{(j-1)})^{K^{(j)}-1}(1-P_{e|K^{(j-2)}K^{(j-1)}}^{(j-1)})^{K^{(j-1)}-K^{(j)}}= f_{\mathbb{K}^{(j)}}^{(J)}.
\end{align}
\noindent By substituting the expression of $P_{K^{\left(j\right)}}$ into $\Phi_{\mathbb{K}^{(j-1)}}^{(J)}$, we have 
\begin{align}
\Phi_{\mathbb{K}^{(j-1)}}^{(J)}= \sum\limits_{K^{(j)}=1}^{K^{(j-1)}}f_{\mathbb{K}^{(j)}}^{(J)} \Phi_{\mathbb{K}^{(j)}}^{(J)},
\end{align}
\noindent which ends the proof.
\end{proof}

Despite the FER has no closed-form expression, the recursive form in (\ref{recursive}) admits efficient evaluation. We also find the average BLER in (\ref{experror}) can be tightly approximated for $M\geq |\hat{\mathcal{K}}^{(j)}|$ as stated in the following corollary, which avoids computing the Q-function in the expectation operator so that the numerical evaluation can be further accelerated.

\begin{corollary}\label{corollary1}
Consider the case with $M\geq|\hat{\mathcal{K}}^{(j)}|$, and suppose there are  $e^{(j)}$ missed detection and $f^{(j)}$ false alarm users in the $j$-th transmission block, the average BLER in (\ref{experror}) can be approximated as follows: \vspace{-0.5cm}
\begin{align}
\bar{\varepsilon}_{k|e^{(j)}f^{(j)}}^{(j)}=\frac{\underline{\gamma}(\theta_{2}^{(j)},\frac{2^{\frac{c}{d}}-1}{2\theta_{1}^{(j)}})}{\Gamma(\theta_{2}^{(j)})},
\label{appro}
\end{align}

\noindent where $\underline{\gamma}(\cdot)$ denotes the lower incomplete Gamma function.
\end{corollary}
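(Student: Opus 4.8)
The plan is to replace the finite-blocklength BLER curve $\varepsilon_{k}^{(j)}(\gamma)$ appearing in (\ref{fblerror}) by a unit-step function and then integrate it against the Chi-square density of the post-processing SNR furnished by Lemma \ref{lemma6}. Recall that the transmission rate is $R = c/d$ bits per symbol, so the capacity equation $C(\gamma) = R$, i.e., $\log_{2}(1+\gamma) = c/d$, has the unique root $\gamma^{\star} = 2^{c/d}-1$. By Lemma \ref{lemma5}, $\varepsilon_{k}^{(j)}(\gamma) = Q\!\left(\frac{C(\gamma)-R}{\sqrt{V(\gamma)/d}}\right)$ is a strictly decreasing function of $\gamma$; moreover, as the blocklength $d$ increases the denominator $\sqrt{V(\gamma)/d}$ shrinks while the numerator $C(\gamma)-R$ keeps a fixed sign on each side of $\gamma^{\star}$, so the argument of $Q(\cdot)$ diverges to $+\infty$ for $\gamma<\gamma^{\star}$ and to $-\infty$ for $\gamma>\gamma^{\star}$. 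Hence $\varepsilon_{k}^{(j)}(\gamma)\approx \mathbf{1}\{\gamma<\gamma^{\star}\}$, i.e., the BLER is well approximated by a sharp threshold at $\gamma^{\star}$.

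Substituting this step approximation into the first branch of (\ref{experror}) (the case $M\geq|\hat{\mathcal{K}}^{(j)}|$) gives
\begin{align}
\bar{\varepsilon}_{k|e^{(j)}f^{(j)}}^{(j)} \approx \Pr\!\left(\gamma_{k|e^{(j)}f^{(j)}}^{(j)} < \gamma^{\star}\right) = \int_{0}^{\gamma^{\star}} \Psi_{\gamma_{k|e^{(j)}f^{(j)}}^{(j)}}(x)\, dx .
\end{align}
Inserting the Chi-square PDF $\Psi_{\gamma_{k|e^{(j)}f^{(j)}}^{(j)}}(x)$ from Lemma \ref{lemma6} and substituting $u = x/(2\theta_{1}^{(j)})$ turns the right-hand side into $\frac{1}{\Gamma(\theta_{2}^{(j)})}\int_{0}^{(2^{c/d}-1)/(2\theta_{1}^{(j)})} u^{\theta_{2}^{(j)}-1} e^{-u}\, du$, which is precisely the regularized lower incomplete Gamma function $\frac{1}{\Gamma(\theta_{2}^{(j)})}\underline{\gamma}\!\left(\theta_{2}^{(j)}, \frac{2^{c/d}-1}{2\theta_{1}^{(j)}}\right)$, establishing (\ref{appro}).

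The step to watch is the error introduced by the step-function approximation of $Q(\cdot)$. Away from $\gamma^{\star}$ this approximation is exponentially accurate because $Q(\cdot)$ saturates to $0$ or $1$, so the only non-negligible discrepancy comes from an $O(1/\sqrt{d})$-wide transition region around $\gamma^{\star}$. I would make this precise by linearizing $C(\gamma)-R$ about $\gamma^{\star}$ and bounding the contribution of that region to the expectation, using the smoothness of $\Psi_{\gamma_{k|e^{(j)}f^{(j)}}^{(j)}}$ over such a narrow interval so that the residual is $O(1/\sqrt{d})$ and essentially unbiased thanks to the near-odd symmetry of $Q(\cdot)$ about $\gamma^{\star}$. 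Since Corollary \ref{corollary1} is stated only as an approximation — and is corroborated numerically in Section \ref{sectionvi} — this heuristic accuracy argument suffices to support the claim.
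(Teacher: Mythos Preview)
Your argument is correct and reaches exactly the same closed form, but the paper's own derivation proceeds by a slightly different two–step route rather than the direct step–function replacement you use. In Appendix~\ref{appendixd} the authors first invoke the standard \emph{linearization} of the normal approximation (from \cite{b19}), writing $\varepsilon_k^{(j)}(\gamma)$ as a piecewise–linear ramp that equals $1$ on $[0,v]$, decreases linearly on $(v,\mu)$, and equals $0$ on $[\mu,\infty)$, with $v=r-\tfrac{1}{2\chi\sqrt d}$, $\mu=r+\tfrac{1}{2\chi\sqrt d}$, $r=2^{c/d}-1$. An integration by parts then yields $\bar\varepsilon_{k|e^{(j)}f^{(j)}}^{(j)}\approx \chi\sqrt d\int_v^\mu F_{\gamma}(x)\,dx$, and a midpoint (first–order Riemann) approximation on this narrow interval gives $F_{\gamma}\!\big(\tfrac{v+\mu}{2}\big)=F_{\gamma}(2^{c/d}-1)$, which is the regularized lower incomplete Gamma expression in~(\ref{appro}). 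Your single step-function approximation collapses these two stages into one; it is more elementary and makes the same $O(1/\sqrt d)$ transition-width reasoning explicit, while the paper's version has the advantage of tying into the existing finite-blocklength linearization literature and of giving an explicit handle on the interval width $\mu-v=1/(\chi\sqrt d)$.

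One small slip to fix: for $\gamma<\gamma^\star$ the numerator $C(\gamma)-R$ is \emph{negative}, so the argument of $Q(\cdot)$ tends to $-\infty$ (hence $Q\to 1$), and for $\gamma>\gamma^\star$ it tends to $+\infty$ (hence $Q\to 0$). You have the signs of the divergence swapped in your narrative, though your conclusion $\varepsilon_k^{(j)}(\gamma)\approx\mathbf 1\{\gamma<\gamma^\star\}$ is stated correctly.
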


\begin{proof}
Please refer to Appendix \ref{appendixd}.
\end{proof}

We note that the FER achieved by using the conventional AMP-based receiver for all the transmission blocks is a special instance of (\ref{recursive}), where $\{P_{e|K^{\left(j-1\right)}K^{\left(j\right)}}^{(j)}\}$'s are replaced by the average BLER of the retransmission blocks with the AMP-based receiver that can be obtained via a similar method for Theorem \ref{theorem3}.
\vspace{-0.3cm}
\section{Simulation Results} \label{sectionvi}
We consider a single-cell uplink cellular network with 2,000 users, which are uniformly distributed within a circular ring with the inner and outer radius as 0.05 and 1 km, respectively. The BS antenna number $M$ and the pilot length $L$ are both 100, and the noise variance $\sigma^2$ is set as $-109$ dBm. We assume the blocklength $T$ equals 250 symbol intervals, and $c=50$ information bits need to be transmitted for each active user in each frame. By default, we simulate frames with two transmission blocks, i.e., $J=2$, and all the empirical results are averaged over $10,000$ independent channel realizations.
\vspace{-0.5cm}
\subsection{Results}
We first evaluate the activity detection error of the conventional AMP-based receiver and the proposed correlated AMP-based receiver in Fig. \ref{audampsi}, including the missed detection and false alarm probability versus the number of active users, where the target received signal strength is set as $\beta=-109$ dBm. Specifically, the activity detection error probability in the first transmission block is shown in Fig. \ref{audampsi}(a), while those in the second transmission block is shown in Fig. \ref{audampsi}(b)-(d), assuming $K^{(2)}\slash K= 0.3$, $0.5$, and $0.7$, respectively. Note that in the first transmission block, the operations of both receivers are the same so they have identical performance. It is observed that all the analytical curves match well with the empirical results, which corroborates the theoretical analysis of the missed detection and false alarm probability in (\ref{md}) and (\ref{fa}), respectively. For either the initial transmission or retransmission, both the missed detection and false alarm probability increase with the number of active users $K$, and among Fig. \ref{audampsi}(b)-(d), the error probabilities increase with $K^{(2)}$ for a given value of $K$. These observations indicate that a large number of concurrent transmissions degrade the activity detection performance due to more severe RA collision as non-orthogonal pilot sequences are adopted. Besides, the activity detection error probabilities of the correlated AMP-based receiver in the retransmission block are substantially smaller than those of the AMP-based receiver, which demonstrates the effectiveness of exploiting the user activity correlation among adjacent transmission blocks when designing receivers for grant-free massive RA systems with retransmission.
\begin{figure}[htbp]
\centering
\includegraphics[width=3in]{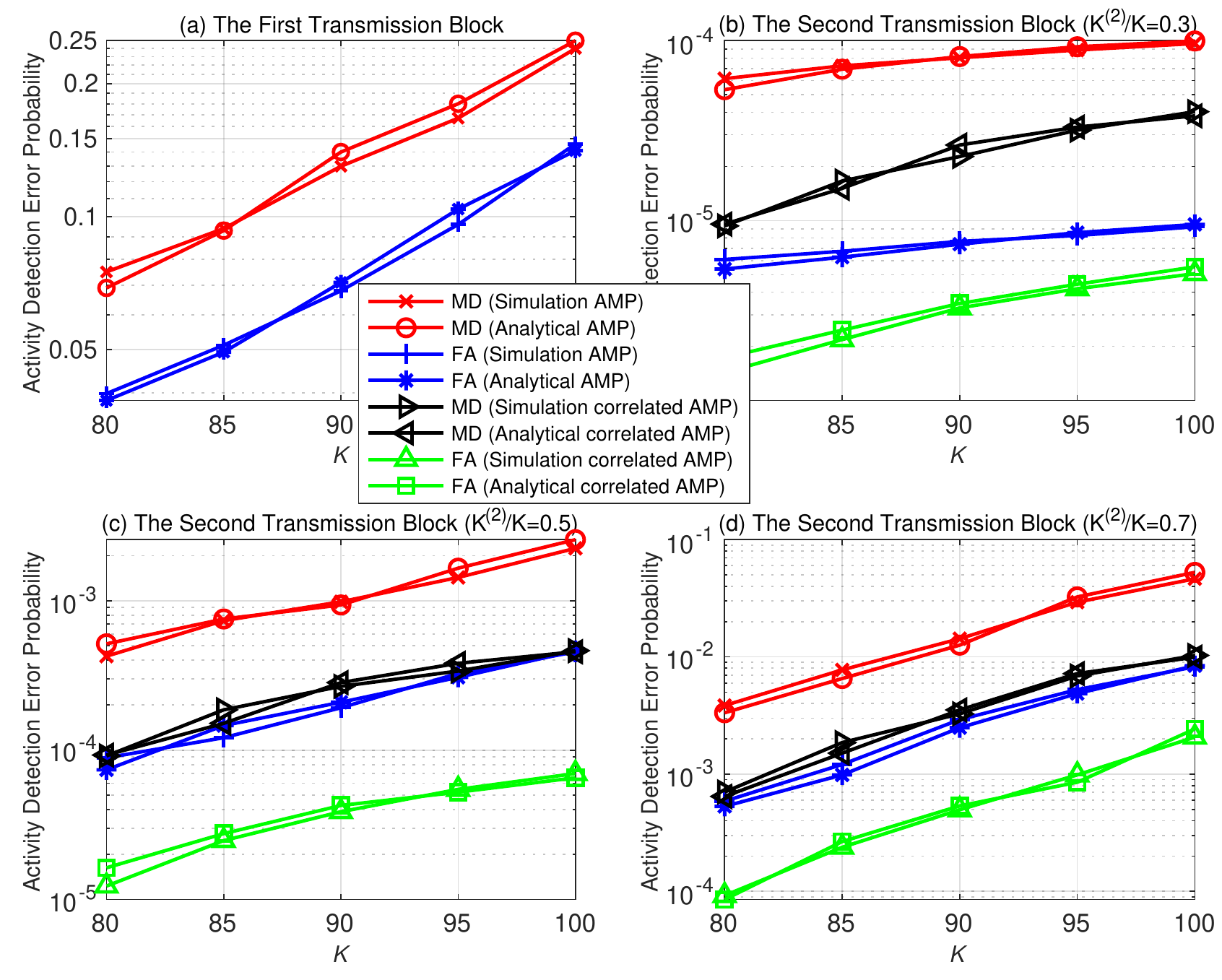}
\caption{Activity detection error probability vs. the number of active users ($J=2$ and $\beta = -109$ dBm), where ‘MD’ and ‘FA’ abbreviate missed detection and false alarm, respectively.}
\vspace{-0.5cm}
\label{audampsi}
\end{figure}

Next, we verify the channel estimation error as predicted by Lemma \ref{lemma4} for an active user that is correctly detected in Fig. \ref{delvk}. Similar to the observations in Fig. \ref{audampsi}, the simulation results are perfectly aligned with the theoretical analysis although the asymptotic assumptions on $N$, $K^{\left(j\right)}$, $M$, and $L$ may not be valid. Also, with an increased number of active users, the channel estimation performance of both the AMP and correlated AMP-based receivers deteriorates. Such a phenomenon accords to that appears in Fig. \ref{audampsi}, since both receivers determine the set of active users by using a thresholding operation on the estimated effective channel coefficients. Besides, comparisons of the channel estimation error in Fig. \ref{delvk}(b)-(d) further validate the competence of the correlated AMP-based receiver over the conventional design.
\begin{figure}[htbp]
\centering
\includegraphics[width=3in]{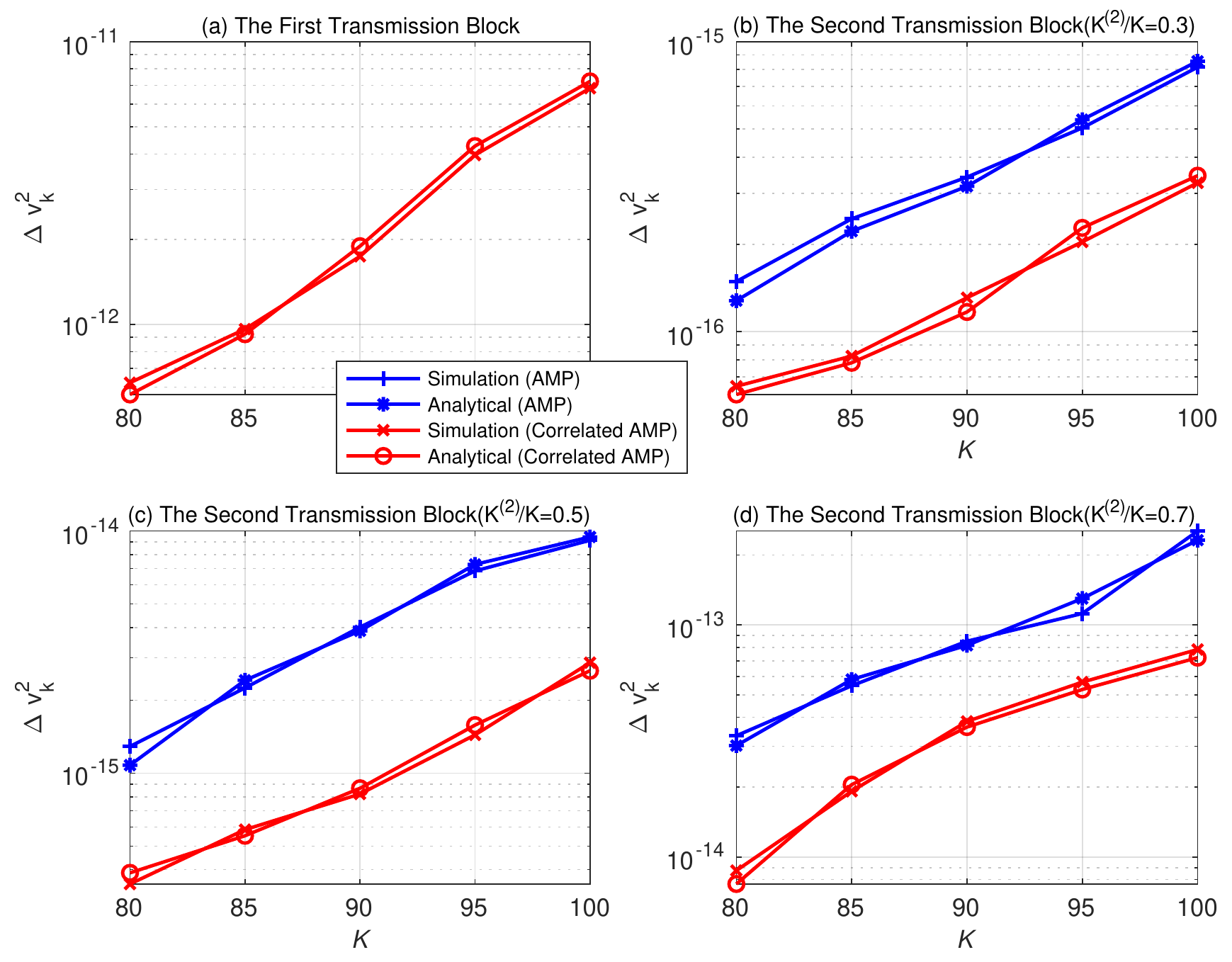}
\caption{Channel estimation error vs. the number of active users ($J=2$ and $\beta = -109$ dBm).}
\vspace{-0.9cm}
\label{delvk}
\end{figure}

The FERs achieved by the two receivers are evaluated in Fig. \ref{blervk}, where the theoretical results are computed using (\ref{recursive}), for which, two methods are adopted to evaluate the average BLER terms $\{\bar{\varepsilon}_{k|e^{(j)}f^{(j)}}^{(j)}\}$'s, including: 1) computing the expectation in (\ref{experror}) using numerical integration, and 2) computing the closed-form approximation in (\ref{appro}). We simulate two scenarios with different target received signal strength $\beta=-109$ dBm and $-106$ dBm. It is observed that the analytical curves obtained from both methods fit well with the empirical results, which shows the validity of the proposed closed-form average BLER approximation for efficient performance evaluation. Besides, as a direct consequence of the activity detection and channel estimation performance improvements, the correlated AMP-based receiver enjoys considerably lower FER compared with the AMP-based receiver. In addition, the FERs of both the AMP- and correlated AMP-based receiver decrease when the target received signal strength increases, and the correlated AMP-based receiver secures more significant performance gain compared with the AMP-based receiver with a higher target received signal strength.
\begin{figure}[htbp]
\centering
\includegraphics[width=3in]{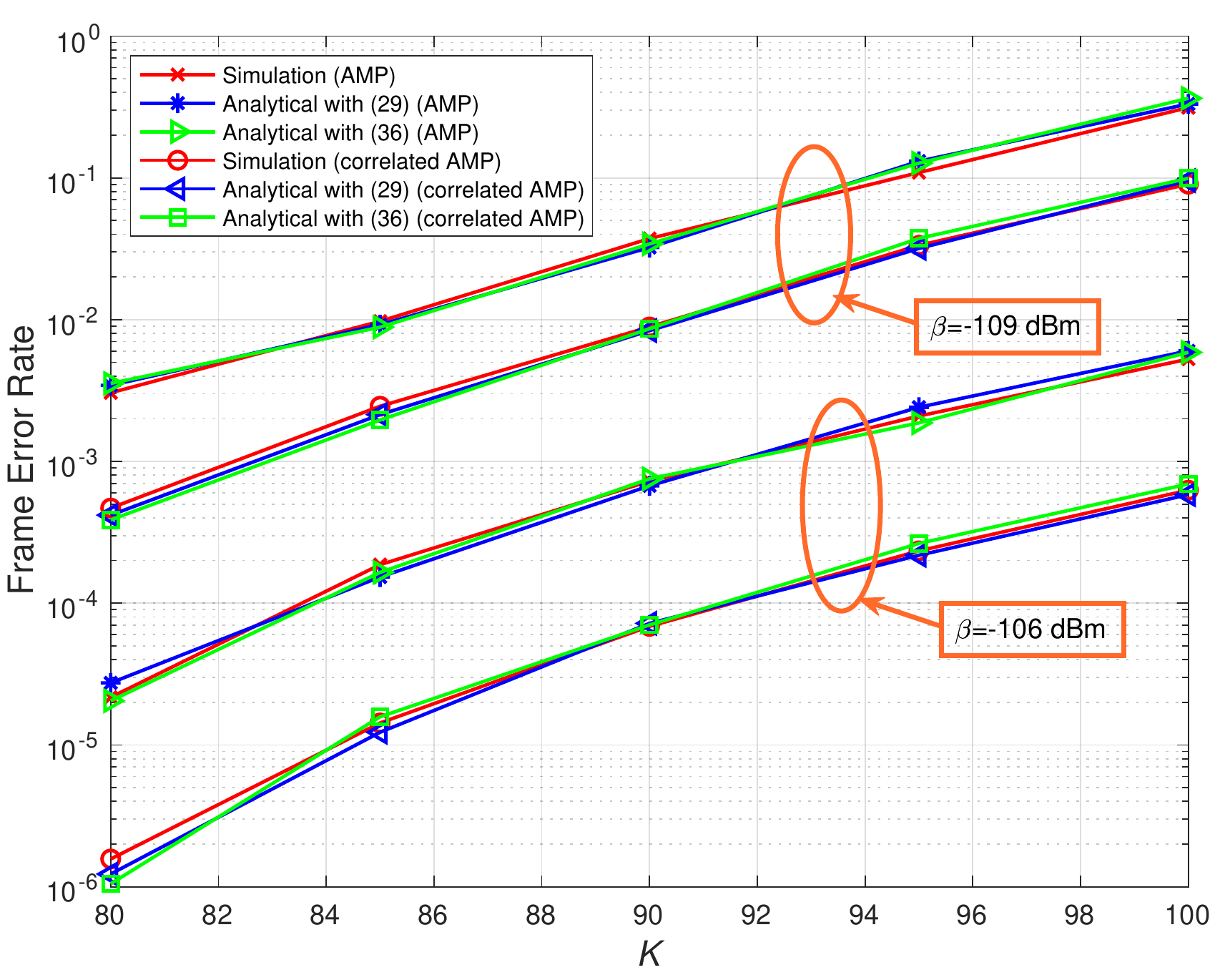}
\caption{Frame error rate vs. the number of active users ($J=2$, $\beta = -109$ dBm and $-106$ dBm).}
\vspace{-0.5cm}
\label{blervk}
\end{figure}

Fig. \ref{blervsnr} shows the FER of both the AMP- and correlated AMP-based receiver, versus the target receive SNR, where the analytical results are obtained with the average BLER approximation in (\ref{appro}). To demonstrate the benefits of retransmission, we consider three settings with different transmission blocks in a frame. We kindly note that obtaining stable simulation results at the high SNR regime requires extremely heavy computation, and the analytical results suffice to provide accurate performance characterization. It is seen from the figure that the FERs decrease with the target receive SNR, which comes at the expense of higher transmit power consumption at the users. Besides, the correlated AMP-based receiver consistently outperforms the AMP-based receiver and the performance gain is more significant with a higher receive SNR. This is attributed to the availability of more precise historical channel estimation results for the denoiser of the correlated AMP algorithm in the retransmission blocks. Besides, with more allowable retransmission attempts, the FERs decrease for a target receive SNR, which follows similar observations in grant-based RA because of the retransmission diversity gain. In addition, the FER reduction achieved by the correlated AMP-based receiver increases with $J$, which substantiates the need of dedicated receiver optimization for grant-free massive RA systems with retransmission, especially for those that can tolerate a longer delay, i.e., configuring a larger value of $J$ is feasible.
\vspace{-0.5cm}
\begin{figure}[htbp]
\centering
\includegraphics[width=3in]{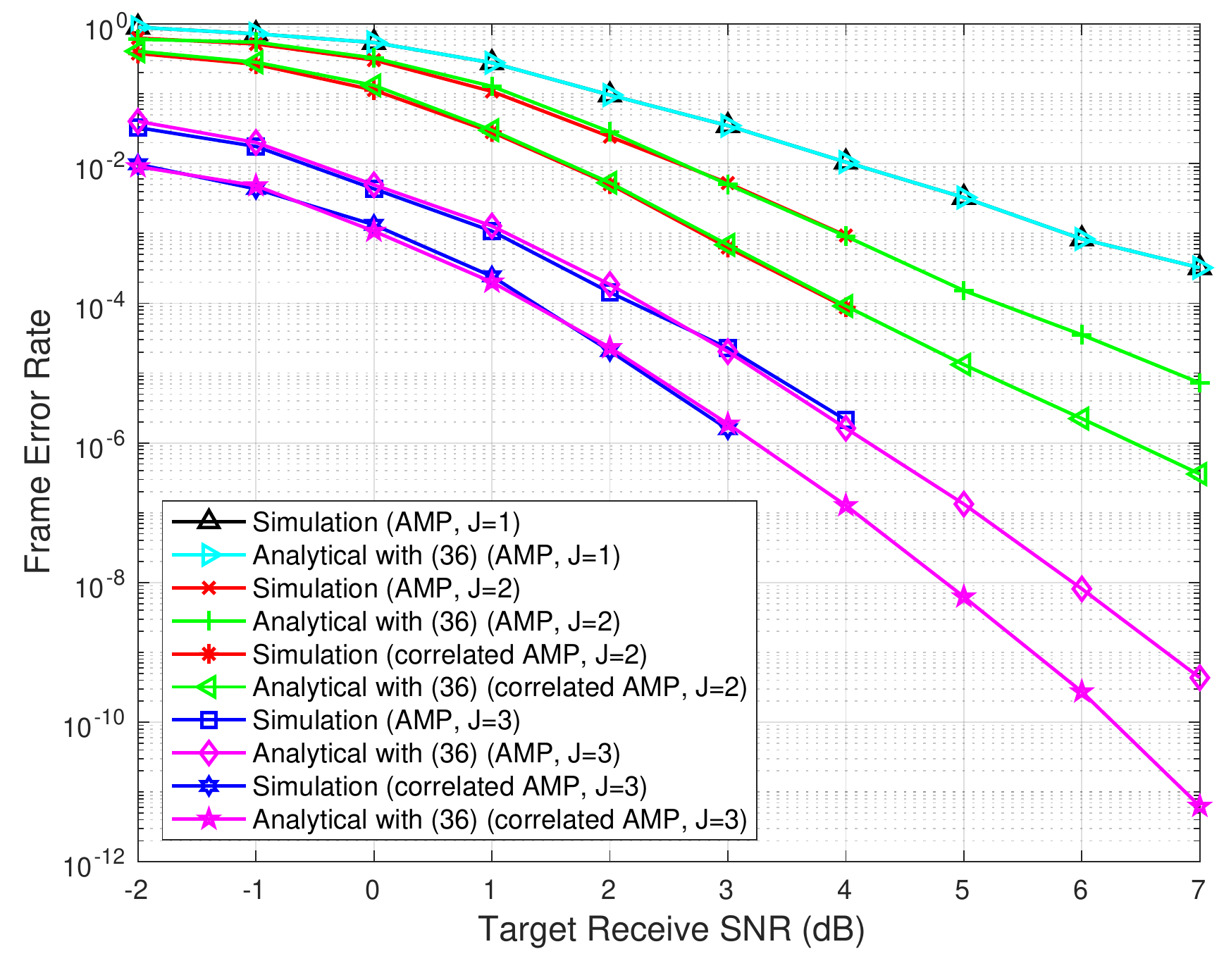}
\caption{Frame error rate vs. the target receive SNR.}
\label{blervsnr}
\end{figure}
\section{Conclusions \label{sectionvii}}
In this paper, we investigated the grant-free massive random access (RA) with retransmission to improve the reliability of massive machine-type communications. Observing the user activity correlation between adjacent transmission blocks, we developed a new uplink receiver at the base station based on the correlated approximate message passing (AMP) algorithm, which effectively utilizes such prior information to improve the accuracy of user activity detection, channel estimation, and data decoding. Then, based on the state evolution of the correlated AMP algorithm, we further analyzed the frame error rate by characterizing the activity detection, channel estimation, and data decoding error in a single transmission block. From our simulation, the proposed correlated AMP-based receiver showed its effectiveness in grant-free massive RA systems with retransmission compared with the classic AMP-based receiver, and the corresponding theoretical analysis is also validated.

From our study, it is concluded that retransmission is an important mechanism for grant-free massive RA, where the receiver operations at the retransmission blocks should be carefully designed by exploiting all the available prior knowledge. Besides, the proposed analytical framework is valuable for investigating other design problems for grant-free massive RA systems with retransmission, for instance, opportunistic user admission control, to further improve the communication performance.

\appendices
\vspace{-0.5cm}
\section{Proof of Theorem 1 \label{appendixa}}
We derive the MMSE denoiser based on the correlation of two adjacent transmission blocks, which is given by the conditional expectation of $\mathbf{X}^{(j)}$ as follows:
\begin{align}
\eta^{(j)}_{t}(\mathbf{a}_{n,t}^{(j)},\mathbf{b}_{n,\infty}^{(j-1)})=\mathbb{E}[\mathbf{X}^{(j)}|\mathbf{X}^{(j)}+(\mathbf{\Sigma}^{(j)}_{t})^{\frac{1}{2}}\mathbf{V}^{(j)}=\mathbf{a}_{n,t}^{(j)}, \tilde{\mathbf{X}}^{(j-1)}=\mathbf{b}_{n,\infty}^{(j-1)}].
\end{align}

\noindent The above expression can be expanded using the law of total probability as follows:
\begin{align}
\begin{aligned}
\eta^{(j)}_{t}(\mathbf{a}_{n,t}^{(j)},\mathbf{b}_{n,\infty}^{(j-1)})\ &=\sum_{i=1}^{3}p(\text{Case} \ i|\mathbf{a}_{n,t}^{(j)},\mathbf{b}_{n,\infty}^{(j-1)})\mathbb{E}[\mathbf{h}^{(j)}_{n}|\mathbf{a}_{n,t}^{(j)},\mathbf{b}_{n,\infty}^{(j-1)},\text{Case} \ i]\\
&\overset{(a)}{=}p(\text{Case} \ 3|\mathbf{a}_{n,t}^{(j)},\mathbf{b}_{n,\infty}^{(j-1)})\mathbb{E}[\mathbf{h}^{(j)}_{n}|\mathbf{a}_{n,t}^{(j)},\mathbf{b}_{n,\infty}^{(j-1)},\text{Case} \ 3],
\end{aligned}
\label{etacal}
\end{align}
\noindent where (a) is because $\mathbb{E}[\mathbf{h}^{(j)}_{n}|\mathbf{a}_{n,t}^{(j)},\mathbf{b}_{n,\infty}^{(j-1)},\text{Case 1}]=\mathbb{E}[\mathbf{h}^{(j)}_{n}|\mathbf{a}_{n,t}^{(j)},\mathbf{b}_{n,\infty}^{(j-1)},\text{Case 2}]=0$ since $u_n^{(j)}=0$ in both Case 1 and Case 2. It remains to derive $p(\text{Case} \ 3|\mathbf{a}_{n,t}^{(j)},\mathbf{b}_{n,\infty}^{(j-1)})$ and $\mathbb{E}[\mathbf{h}^{(j)}_{n}|\mathbf{a}_{n,t}^{(j)},\mathbf{b}_{n,\infty}^{(j-1)},\text{Case} \ 3]$. By applying the Bayes' theorem, $p(\text{Case} \ 3|\mathbf{a}_{n,t}^{(j)},\!\mathbf{b}_{n,\infty}^{(j-1)})$ can be written as follows:
\begin{align}
\begin{aligned}
p(\text{Case} \ 3|\mathbf{a}_{n,t}^{(j)},\mathbf{b}_{n,\infty}^{(j-1)})&=\frac{p(\mathbf{a}_{n,t}^{(j)},\mathbf{b}_{n,\infty}^{(j-1)}|\text{Case} \ 3)p(\text{Case} \ 3)}{p(\mathbf{a}_{n,t}^{(j)},\mathbf{b}_{n,\infty}^{(j-1)})}\\
&=\frac{p(\mathbf{a}_{n,t}^{(j)},\mathbf{b}_{n,\infty}^{(j-1)}|\text{Case} \ 3)p(\text{Case} \ 3)}{\sum_{i=1}^{3} p(\mathbf{a}_{n,t}^{(j)},\mathbf{b}_{n,\infty}^{(j-1)}|\text{Case} \ i)p(\text{Case} \ i)},
\end{aligned}
\label{pcase3}
\end{align}

\noindent where $p(\text{Case} \ i)=\epsilon_{i}^{(j)}$, $\forall i$. In particular,
\begin{align}
\begin{aligned}
p(\mathbf{a}_{n,t}^{(j)},\mathbf{b}_{n,\infty}^{(j-1)}|\text{Case} \ 1)&=p(\mathbf{a}_{n,t}^{(j)}|\text{Case} \ 1)p(\mathbf{b}_{n,\infty}^{(j-1)}|\text{Case} \ 1)\\
&=\frac{1}{\pi(\tau^{(j)}_{t})^{2M} } e^{ -\frac{||\mathbf{a}_{n,t}^{(j)}||^2 }{(\tau^{(j)}_{t})^2}} \frac{1}{\pi (\tau^{(j-1)}_{\infty})^{2M}} e^{ -\frac{||\mathbf{b}_{n,\infty}^{(j-1)}||^2 }{(\tau^{(j-1)}_{\infty})^{2}}},
\end{aligned}
\end{align}

\noindent which is due to the fact that $\mathbf{a}_{n,t}^{(j)}$ and $\mathbf{b}_{n,\infty}^{(j-1)}$ are independent, and $\mathbf{h}_{n}^{(j)}=\mathbf{h}_{n}^{(j-1)}=0$. Similarly,
\begin{align}
p(\mathbf{a}_{n,t}^{(j)},\mathbf{b}_{n,\infty}^{(j-1)}|\text{Case} \ 2)=\frac{1}{\pi \left(\beta+(\tau^{(j)}_{t})^2\right)^{M}} e^{ -\frac{||\mathbf{a}_{n,t}^{(j)}||^2 }{\beta+(\tau^{(j)}_{t})^{2}}} \frac{1}{\pi (\tau^{(j-1)}_{\infty})^{2M}} e^{ -\frac{||\mathbf{b}_{n,\infty}^{(j-1)}||^{2} }{(\tau^{(j-1)}_{\infty})^{2}}},
\end{align}
\begin{align}
p(\mathbf{a}_{n,t}^{(j)},\mathbf{b}_{n,\infty}^{(j-1)}|\text{Case} \ 3)=\frac{1}{\pi \left(\beta+(\tau^{(j)}_{t})^{2}\right)^{M}} e^{ -\frac{||\mathbf{a}_{n,t}^{(j)}||^2 }{\beta+(\tau^{(j)}_{t})^{2}}} \frac{1}{\pi \left(\beta+(\tau^{(j-1)}_{\infty})^{2}\right)^{M}} e^{ -\frac{||\mathbf{b}_{n,\infty}^{(j-1)}||^2 }{\beta+(\tau^{(j-1)}_{\infty})^2}}.
\end{align}

Since $\mathbf{b}_{n,\infty}^{(j-1)}$ is independent of $\mathbf{h}_{n}^{(j)}$, we have
\begin{align}
\mathbb{E}[\mathbf{h}^{(j)}_{n}|\mathbf{a}_{n,t}^{(j)},\mathbf{b}_{n,\infty}^{(j-1)},\text{Case 3}]=\mathbb{E}[\mathbf{h}^{(j)}_{n}|\mathbf{a}_{n,t}^{(j)},\text{Case 3}]=\frac{\beta\mathbf{a}_{n,t}^{(j)}}{\beta+(\tau_{t}^{(j)})^2}.
\label{expcase3}
\end{align}

By substituting (\ref{pcase3})-(\ref{expcase3}) into (\ref{etacal}), the MMSE denoiser as stated in (\ref{denoiserampsi}) can be obtained.
\vspace{-0.5cm}
\section{Proof of Theorem 2 \label{appendixb}}
When the correlated AMP algorithm converges, the hypothesis testing problem is formulated to determine the set of active users as follows:
\begin{align}
\left\{\begin{array}{l}
H_{0}: \mathbf{h}^{(j)}_{n}=0, \text {inactive,} \\
H_{1}: \mathbf{h}^{(j)}_{n} \neq 0, \text { active. }
\end{array}\right.
\end{align}

\noindent Since the observations $\mathbf{h}^{(j)}_{n,\infty}+\left(\mathbf{R}^{(j)}_{\infty}\right)^{\mathrm{H}}\mathbf{p}_{n}$ and $\mathbf{h}^{(j-1)}_{n,\infty}+\left(\mathbf{R}^{(j-1)}_{\infty}\right)^{\mathrm{H}}\mathbf{p}_{n}$ that can be statistically characterized by $\mathbf{a}_{n,\infty}^{(j)}$ and $\mathbf{b}_{n,\infty}^{(j-1)}$ from the state evolution are available, the decision rule can be expressed in terms of the likelihood functions of the two cases, i.e., $p\left(\mathbf{a}_{n,\infty}^{(j)},\mathbf{b}_{n,\infty}^{(j-1)} \mid \mathbf{h}^{(j)}_{n} \neq 0\right)$ and $p\left(\mathbf{a}_{n,\infty}^{(j)},\mathbf{b}_{n,\infty}^{(j-1)} \mid \mathbf{h}^{(j)}_{n} = 0\right)$, as follows: \vspace{-0.3cm}
\begin{align}
\ln \left(\frac{p\left(\mathbf{a}_{n,\infty}^{(j)},\mathbf{b}_{n,\infty}^{(j-1)} \mid \mathbf{h}^{(j)}_{n} \neq 0\right)}{p \left(\mathbf{a}_{n,\infty}^{(j)},\mathbf{b}_{n,\infty}^{(j-1)} \mid \mathbf{h}^{(j)}_{n}=0\right)}\right) \overset{H_{0}}{\underset{H_{1}}{\lessgtr}}0.
\label{decisionrule}
\end{align}

\noindent By applying the Bayes' theorem, $p\left(\mathbf{a}_{n,\infty}^{(j)},\mathbf{b}_{n,\infty}^{(j-1)} \mid \mathbf{h}^{(j)}_{n} \neq 0\right)$ and $p\left(\mathbf{a}_{n,\infty}^{(j)},\mathbf{b}_{n,\infty}^{(j-1)} \mid \mathbf{h}^{(j)}_{n} = 0\right)$ can be derived as follows:
\begin{align}
\begin{aligned}
p\left(\mathbf{a}_{n,\infty}^{(j)},\mathbf{b}_{n,\infty}^{(j-1)} \mid \mathbf{h}^{(j)}_{n} \neq 0\right)&=\frac{p\left(\mathbf{a}_{n,\infty}^{(j)},\mathbf{b}_{n,\infty}^{(j-1)},\mathbf{h}^{(j)}_{n} \neq 0\right)}{p(\mathbf{h}^{(j)}_{n} \neq 0)}=\frac{p(\mathbf{a}_{n,\infty}^{(j)},\mathbf{b}_{n,\infty}^{(j-1)}|\text{Case} \ 3)\epsilon_3^{(j)}}{\epsilon_3^{(j)}}\\
&=\frac{1}{\pi \left(\beta+(\tau^{(j)}_{\infty})^2\right)^{M}} e^{ -\frac{||\mathbf{a}_{n,\infty}^{(j)}||^2 }{\beta+(\tau^{(j)}_{\infty})^2}} \frac{1}{\pi \left(\beta+(\tau^{(j-1)}_{\infty})^2\right)^{M}} e^{ -\frac{||\mathbf{b}_{n,\infty}^{(j-1)}||^2 }{\beta+(\tau^{(j-1)}_{\infty})^2}},
\end{aligned}
\end{align}

\begin{align}
\begin{aligned}
p\left(\mathbf{a}_{n,\infty}^{(j)},\mathbf{b}_{n,\infty}^{(j-1)} \mid \mathbf{h}^{(j)}_{n} =0\right)
&=\frac{p(\mathbf{a}_{n,\infty}^{(j)},\mathbf{b}_{n,\infty}^{(j-1)}|\text{Case} \ 1)\epsilon_1^{(j)}+p(\mathbf{a}_{n,\infty}^{(j)},\mathbf{b}_{n,\infty}^{(j-1)}|\text{Case} \ 2)\epsilon_2^{(j)}}{\epsilon_1^{(j)}+\epsilon_2^{(j)}}\\
&=\frac{\frac{1}{ \pi (\tau^{(j-1)}_{\infty})^{2M}} e^{-\frac{||\mathbf{b}_{n,\infty}^{(j-1)}||^2 }{(\tau^{(j-1)}_{\infty})^2}}\left(\frac{\epsilon_1^{(j)}}{\pi (\tau^{(j)}_{\infty})^{2M}} e^{ -\frac{||\mathbf{a}_{n,\infty}^{(j)}||^2}{(\tau^{(j)}_{\infty})^2}}+\frac{\epsilon_{2}^{(j)}}{\pi \left(\beta+(\tau^{(j)}_{\infty})^2\right)^{M}} e^{ -\frac{||\mathbf{a}_{n,\infty}^{(j)}||^2 }{\beta+(\tau^{(j)}_{\infty})^2}}\right)}{\epsilon_1^{(j)}+\epsilon_2^{(j)}}.
\end{aligned}
\end{align}
\noindent Therefore, \vspace{-0.2cm}
\begin{align}
\frac{p\left(\mathbf{a}_{n,t}^{(j)},\mathbf{b}_{n,\infty}^{(j-1)} \mid \mathbf{h}^{(j)}_{n} \neq 0\right)}{p \left(\mathbf{a}_{n,t}^{(j)},\mathbf{b}_{n,\infty}^{(j-1)} \mid \mathbf{h}^{(j)}_{n}=0\right)}=\frac{\epsilon_{1}^{(j)}+\epsilon_{2}^{(j)}}{\Phi_{n,2}^{(j-1)}\left(\epsilon_{1}^{(j)}\Phi_{n,\infty,1}^{(j)}+\epsilon_{2}^{(j)}\right)}.
\label{beforelog}
\end{align}

\noindent By taking the logarithm on both sides of (\ref{beforelog}), the decision rule in (\ref{decisionrule}) can be equivalently expressed as follows: \vspace{-0.3cm}
\begin{align}
||\mathbf{a}_{n,\infty}^{(j)}||^2 \overset{H_{0}}{\underset{H_{1}}{\lessgtr}} \frac{M \ln \left(1+\frac{\beta}{(\tau^{(j)}_{\infty})^{2}}\right)+\ln\left(\frac{\epsilon_{1}^{(j)}\Phi_{n,2}^{(j-1)}}{\epsilon_{1}^{(j)}+\epsilon_{2}^{(j)}-\epsilon_{2}^{(j)}\Phi_{n,2}^{(j-1)}}\right)}{\frac{1}{(\tau_{\infty}^{(j)})^2}-\frac{1}{(\tau_{\infty}^{(j)})^{2}+\beta}}.
\end{align}
\noindent To further derive a common threshold, we use $\mathbb{E}[||\mathbf{\tilde{X}}^{(j-1)}||^{2}]$ to replace $||\mathbf{b}_{n,\infty}^{(j-1)}||^{2}$ in $\Phi_{n,2}^{(j-1)}$, and the threshold can be derived as follows:
\begin{align}
||\mathbf{a}_{n,\infty}^{(j)}||^2 \overset{H_{0}}{\underset{H_{1}}{\lessgtr}} \frac{M \ln \left(1+\frac{\beta}{(\tau^{(j)}_{\infty})^{2}}\right)+\ln\left(\frac{\epsilon_{1}^{(j)}\Phi_{2}^{(j-1)}}{\epsilon_{1}^{(j)}+\epsilon_{2}^{(j)}-\epsilon_{2}^{(j)}\Phi_{2}^{(j-1)}}\right)}{\frac{1}{(\tau_{\infty}^{(j)})^2}-\frac{1}{(\tau_{\infty}^{(j)})^{2}+\beta}}\triangleq l_{r}^{(j)},
\label{beforeactiveampsi}
\end{align}

\noindent where $\Phi_{2}^{(j-1)}=\left(\frac{\beta+(\tau_{\infty}^{(j-1)})^2}{(\tau_{\infty}^{(j-1)})^2}\right)^M e^ {\left(\frac{1}{\beta+(\tau_{\infty}^{(j-1)})^2}-\frac{1}{(\tau_{\infty}^{(j-1)})^2}\right)\left(\beta\lambda^{(j-1)}+(\tau_{\infty}^{(j-1)})^{2}\right)}$. Since $\mathbf{a}_{n,\infty}^{(j)}$ is statistically equivalent to $\mathbf{h}^{(j)}_{n,\infty}+\left(\mathbf{R}^{(j)}_{\infty}\right)^{\mathrm{H}}\mathbf{p}_{n}$, by replacing it with $\mathbf{h}^{(j)}_{n,\infty}+\left(\mathbf{R}^{(j)}_{\infty}\right)^{\mathrm{H}}\mathbf{p}_{n}$ in (\ref{beforeactiveampsi}), we complete the proof.
\vspace{-0.5cm}
\section{Proof of Lemma 3 \label{appendixc}}
For the retransmission blocks, i.e., $j>1$, according to the state evolution of the correlated AMP algorithm, $\mathbf{h}^{(j)}_{n,t}+\left(\mathbf{R}_{t}^{(j)}\right)^{\mathrm{H}} \mathbf{p}_{n}^{(j)}$ is statistically equivalent to $\mathbf{a}_{n,t}^{(j)}=\mathbf{h}^{(j)}_{n}+\tau^{(j)}_{t}\mathbf{v}^{(j)}_{n}$. Thus, when $u_{n}^{(j)}=0$, entries of $\mathbf{h}^{(j)}_{n,t}+\left(\mathbf{R}_{t}^{(j)}\right)^{\mathrm{H}} \mathbf{p}_{n}$ follow a independent and identical complex Gaussian distribution with zero mean and variance $\tau_{t}^{(j)}$. When $u_{n}^{(j)}=1$, entries of $\mathbf{h}^{(j)}_{n,t}+\left(\mathbf{R}_{t}^{(j)}\right)^{\mathrm{H}} \mathbf{p}_{n}$ follow a independent and identical complex Gaussian distribution with zero mean and variance $\beta+\tau_{t}^{(j)}$. As a result, random variables $||\mathbf{h}^{(j)}_{n,t}+\left(\mathbf{R}_{t}^{(j)}\right)^{\mathrm{H}} \mathbf{p}_{n}||^{2}/2(\tau_{t}^{(j)})^2$ (corresponds to $u_{n}^{(j)}=0$) and $||\mathbf{h}^{(j)}_{n,t}+\left(\mathbf{R}_{t}^{(j)}\right)^{\mathrm{H}} \mathbf{p}_{n}||^{2}/2(\beta+(\tau_{t}^{(j)})^2)$ (corresponds to $u_{n}^{(j)}=1$) follow the $\chi^2$ distribution with $2M$ degrees of freedom. Let $X$ be a random variable following the $\chi^2$ distribution with $2M$ degrees of freedom. According to the user activity detection criteria in (\ref{activeampsi}), when the correlated AMP algorithm converges, the probability of missed detection can be obtained as follows: \vspace{-0.1cm}
\begin{align}
\begin{aligned}
P_{M}^{(j)} &=p\left(X \leq \frac{2 M l_{r}^{(j)}}{\beta+(\tau_{\infty}^{(j)})^{2}}\right)\\
&=\frac{1}{\Gamma(M)} \bar{\gamma}\left(M, \frac{(\tau^{(j)}_{\infty})^{2}}{\beta}\left(M\ln\left(1+\frac{\beta}{(\tau^{(j)}_{\infty})^{2}}\right)+\ln\left(\frac{\epsilon_{1}^{(j)}\Phi_{2}^{(j-1)}}{\epsilon_{1}^{(j)}+\epsilon_{2}^{(j)}-\epsilon_{2}^{(j)}\Phi_{2}^{(j-1)}}\right)\right)\right).
\end{aligned}
\end{align}

\noindent Similarly, the probability of missed detection can be expressed as follows:
\begin{align}
    \begin{aligned}
P_{F}^{(j)}&=p\left(X \geq \frac{2 M l_{r}^{(j)}}{(\tau_{\infty}^{(j)})^{2}}\right)\\
&=1-\frac{1}{\Gamma(M)} \bar{\gamma}\left(M, \left(1+ \frac{(\tau^{(j)}_{\infty})^{2}}{\beta}\right)\left(M\ln\left(1+\frac{\beta}{(\tau^{(j)}_{\infty})^{2}}\right)+\ln\left(\frac{\epsilon_{1}^{(j)}\Phi_{2}^{(j-1)}}{\epsilon_{1}^{(j)}+\epsilon_{2}^{(j)}-\epsilon_{2}^{(j)}\Phi_{2}^{(j-1)}}\right)\right)\right).
\end{aligned}
\end{align}

For the first transmission block ($j=1$) that uses the AMP algorithm, the probabilities of missed detection and false alarm can be similarly derived \cite{lliu2018}.

\section{Proof of Corollary 1 \label{appendixd}}
By using the linearization technique, the error rate $\varepsilon_{k}^{(j)}(\gamma_{k}^{(j)})$ in (\ref{fblerror}) can be tightly approximated as follows \cite{b19}: \vspace{-0.3cm}
\begin{align}
\varepsilon_{k}^{(j)}(\gamma_{k}^{(j)})=Q\left(\frac{C(\gamma_{k}^{(j)})-R}{\sqrt{V(\gamma_{k}^{(j)})/d}}\right) \approx\left\{\begin{array}{cc}
1, & \gamma_{k}^{(j)} \leq v \\
A^{(j)}\left(\gamma_{k}^{(j)}\right), & v <\gamma_{k}^{(j)}<\mu \\
0, & \gamma_{k}^{(j)} \geq \mu
\end{array}\right.,
\label{approximation}
\end{align}

\noindent where $A^{(j)}\left(\gamma_{k}^{(j)}\right)\triangleq \frac{1}{2}-\chi\sqrt{d}(\gamma_{k}^{(j)}-r)$, $\chi\triangleq \sqrt{\frac{1}{2\pi (2^{\frac{2c}{d}}-1)}}$, $v\triangleq r-\frac{1}{2\chi\sqrt{d}}$, $\mu\triangleq r+\frac{1}{2\chi\sqrt{d}}$, and $r\triangleq 2^{R}-1=2^{\frac{c}{d}}-1$. By substituting the right-hand side of (\ref{approximation}) into (\ref{experror}) with $\gamma_{k}^{(j)} = \gamma_{k|e^{(j)}f^{(j)}}^{(j)}$, $\bar{\varepsilon}_{k|e^{(j)}f^{(j)}}^{(j)}$ can be approximated as follows: \vspace{-0.5cm}
\begin{align}
\bar{\varepsilon}_{k|e^{(j)}f^{(j)}}^{(j)} \approx \chi \sqrt{d} \int_{v}^{\mu} F_{\gamma_{k|e^{(j)}f^{(j)}}^{(j)}}(x) dx,
\end{align}
\noindent where $F_{\gamma_{k|e^{(j)}f^{(j)}}^{(j)}}(x)\triangleq \frac{\underline{\gamma}(\theta_{2}^{(j)},\frac{x}{2\theta_{1}^{(j)}})}{\Gamma(\theta_{2}^{(j)})}$ is the cumulative distribution function (CDF) of the post-processing SNR. Typically, since $d>6\pi$ can be satisfied easily, the integral interval $u-v=\frac{1}{\chi\sqrt{d}}=\sqrt{\frac{2\pi (2^{\frac{2c}{d}}-1)}{d}} < \sqrt{\frac{6\pi}{d}}$ is small, e.g., no larger than $0.5$ in our simulation setting. Therefore, we apply the ﬁrst-order Riemann integral approximation, i.e., $\int_{a}^{b} f(x) d x \approx(b-a) f\left(\frac{a+b}{2}\right)$, to calculate the integral value \cite{yyu2018}. Therefore, the first case in (\ref{experror}) can be approximated as follows: \vspace{-0.4cm}
\begin{align}
\bar{\varepsilon}_{k|e^{(j)}f^{(j)}}^{(j)} \approx \chi \sqrt{d} (\mu-v) F_{\gamma_{k|e^{(j)}f^{(j)}}^{(j)}}\left(\frac{\mu+v}{2}\right)= \frac{\underline{\gamma}(\theta_{2}^{(j)},\frac{2^{\frac{c}{d}}-1}{2\theta_{1}^{(j)}})}{\Gamma(\theta_{2}^{(j)})}.
\end{align}
\ifCLASSOPTIONcaptionsoff
  \newpage
\fi

\end{document}